\definecolor{darkred}{rgb}{0.5,0,0}
\definecolor{darkgreen}{rgb}{0,0.35,0}
\definecolor{darkblue}{rgb}{0,0,0.55}
\newtheorem{theorem}{Theorem}[section]
\newtheorem{lemma}[theorem]{Lemma}
\newtheorem{proposition}[theorem]{Proposition}
\newtheorem{claim}[theorem]{Claim}
\newcommand\F{\mathbb{F}}
\newcommand{\C}{\mathcal{C}}
\newcommand\one{\mathbf{1}}
\newcommand\dho{\partial}
\newcommand\set[1]{{\{{#1}\}}}
\newcommand\ip[1]{{\left\langle{#1}\right\rangle}}
\providecommand{\B}{\mathcal{B}}
\DeclareMathOperator{\Syst}{CoSys}
\DeclareMathOperator{\dist}{dist}
\DeclareMathOperator{\im}{im}
\renewcommand{\epsilon}{\varepsilon}
\newcommand{\eps}{\epsilon}
\newcommand{\ball}{\mathsf{N}}
\newcommand{\ballX}{\ball}
\newcommand{\ballB}{\hat{\ball}}
\newcommand{\sqbinom}{\genfrac{[}{]}{0pt}{}}
\def\showauthornotes{0}
\newcommand{\Authornote}[2]{{\sf\small\color{red}{[#1: #2]}}}
\newcommand{\Authorcomment}[2]{{\sf \small\color{gray}{[#1: #2]}}}
\newcommand{\Authorfnote}[2]{\footnote{\color{red}{#1: #2}}}
\newcommand{\Authornote}[2]{}
\newcommand{\Authorcomment}[2]{}
\newcommand{\Authorfnote}[2]{}
\newcommand{\remove}[1]{}
\newcommand{\norm}[1]{\ensuremath{\left\lVert #1 \right\rVert}}
\newcommand{\abs}[1]{\ensuremath{\left\lvert #1 \right\rvert}}
\newcommand{\U}[1]{\mathbf{u}_{#1}}
\newcommand{\Uempty}{\mathbf{u}_{\emptyset}}
\title{Explicit SoS lower bounds from high-dimensional expanders}
\author{Irit Dinur\thanks{Weizmann Institute of Science, ISRAEL. email: {\tt irit.dinur@weizmann.ac.il}. Research supported by ERC-CoG grant number 772839.}
  \and
  Yuval Filmus\thanks{Technion — Israel Institute of Technology, ISRAEL. email: {\tt yuvalfi@cs.technion.ac.il}. This project has received funding from the European Union's Horizon 2020 research and innovation programme under grant agreement No~802020-ERC-HARMONIC.}
  \and
  Prahladh Harsha\thanks{Tata Institute of Fundamental Research, INDIA. email: {\tt prahladh@tifr.res.in}. Research supported by the Department of Atomic Energy, Government of India, under project no. RTI4001 and in part by the Swarnajayanti fellowship.}
  \and
  Madhur Tulsiani\thanks{Toyota Technological Institute at Chicago. email: {\tt madhurt@ttic.edu}. Research supported by NSF grant CCF-1816372.} 
}
\date{}
\begin{document}
\maketitle

\begin{abstract}
We construct an explicit family of 3XOR instances which is hard for $O(\sqrt{\log n})$ levels of the Sum-of-Squares hierarchy. In contrast to earlier constructions, which involve a random component, our systems can be constructed explicitly in deterministic polynomial time.

Our construction is based on the high-dimensional expanders devised by Lubotzky, Samuels and Vishne, known as LSV complexes or Ramanujan complexes, and our analysis is based on two notions of expansion for these complexes: cosystolic expansion, and a local isoperimetric inequality due to Gromov.

Our construction offers an interesting contrast to the recent work of Alev, Jeronimo and the last author~(FOCS 2019). They showed that 3XOR instances in which the variables correspond to vertices in a high-dimensional expander are easy to solve. In contrast, in our instances the variables correspond to the edges of the complex.
\end{abstract}

\section{Introduction}

We describe a new family of instances of 3XOR, based on high-dimensional expanders, that are hard for the Sum-of-Squares (SoS) hierarchy of semidefinite programming relaxations, which is the most powerful algorithmic framework known for optimizing over constraint satisfaction problems. Unlike previous constructions of 3XOR hard instances for SoS, our construction is explicit, as it is based on the explicit construction of high-dimensional expanders due to Lubotzky, Samuels and Vishne~\cite{LubotzkySV2005-exphdx,LubotzkySV2005-hdx}, which we refer to henceforth as LSV complexes.
\vspace{-2 pt}
\begin{theorem} \label{thm:main-intro}
  There exists a constant $\mu \in (0,1)$ and an infinite family of 3XOR instances on $n$ variables, constructible in deterministic polynomial time, satisfying the following:
  \begin{itemize}
  \item No assignment satisfies more than $1-\mu$ fraction of the constraints.
  \vspace{-3 pt}
  \item Relaxations obtained by $O(\sqrt{\log n})$ levels of the SoS hierarchy fail to refute the instances.
  \end{itemize}
\end{theorem}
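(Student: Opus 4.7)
The 3XOR instance is derived from an LSV complex $X$: variables are edges and constraints are triangles. Assigning $x_e \in \F_2$ to each edge $e$ and $b_t \in \F_2$ to each triangle $t = \{e_1,e_2,e_3\}$, the constraint is $x_{e_1} + x_{e_2} + x_{e_3} = b_t$. Identifying assignments with $1$-cochains $f \in C^1(X;\F_2)$, this is just $(\delta f)(t) = b(t)$. I would choose $b$ to represent a nontrivial class in $H^2(X;\F_2)$, which is available for LSV complexes by a known cohomology computation. Explicitness of the instance is inherited for free from the explicit polynomial-time construction of LSV complexes.

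For the first bullet, for any $f$ the number of violated constraints equals $\norm{\delta f - b}_0$. Since $\delta f \in B^2$ and $b \notin B^2$, the cochain $\delta f - b$ represents the same nontrivial cohomology class as $-b$. By cosystolic expansion of $X$ at dimension $2$---a property of LSV complexes established in the literature---every nontrivial class in $H^2$ has minimum weight at least $\mu \cdot m$ for a universal $\mu > 0$, where $m$ is the number of triangles. Hence every assignment satisfies at most $(1 - \mu) m$ constraints, giving the claimed soundness.

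For the second bullet I would build a degree-$r$ pseudo-expectation $\tilde{E}$ with $r = \Theta(\sqrt{\log n})$ in the local-distributions framework. For each set $S$ of at most $2r$ edges, let $X_S$ be the subcomplex spanned by $S$, and let $\mu_S$ be a distribution on $\F_2^S$ supported on assignments satisfying every triangle constraint of $X_S$. Setting $\tilde{E}[\prod_{e \in T} x_e]$ from the marginals of the $\mu_S$ for $T \subseteq S$, one must verify (i) consistency under marginalization, (ii) compatibility with the 3XOR constraints, and (iii) positive semidefiniteness of the resulting moment matrix. Items (i) and (ii) follow directly from the construction once local satisfiability holds.

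Local satisfiability is the statement that $[b|_{X_S}] = 0$ in $H^2(X_S;\F_2)$, which is precisely what Gromov's local isoperimetric inequality supplies for LSV complexes: small subcomplexes have vanishing top cohomology, with the quantitative threshold producing exactly $r = \Theta(\sqrt{\log n})$. The main obstacle will be (iii): beyond mere existence of a local solution, one needs \emph{distributions} whose marginals agree across overlapping $S$'s, so that the moment matrix is PSD. Achieving this requires extracting genuine local randomness from the multiplicity of local solutions and coupling them coherently using the link structure and expansion of $X$, which I expect to be the most delicate technical part of the proof.
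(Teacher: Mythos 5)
Your soundness argument is exactly the paper's: choose $\beta \in Z^2 \setminus B^2$, observe that the number of violated constraints for assignment $f$ is $|\delta f + \beta|$, and invoke cosystolic expansion (Evra--Kaufman) to get a constant lower bound on this weight. That part is fine.

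For completeness, you propose to construct a degree-$\Theta(\sqrt{\log n})$ pseudo-expectation directly from local distributions, and you honestly flag that the PSD step---patching the local distributions into a globally consistent moment matrix---is ``the most delicate technical part.'' This is precisely the step you have not supplied, and it is a genuine gap, not merely a technical footnote: for general locally satisfiable XOR systems, pointwise local satisfiability does \emph{not} yield consistent marginals or a PSD moment matrix without further structure. The paper sidesteps this entirely by taking a different route: it proves a lower bound of $\Omega(\sqrt{r})$ on the width of any $\oplus$-resolution refutation (with $r = \Theta(\log n)$ the injectivity radius), and then invokes the Grigoriev--Schoenebeck black box (\cref{lem:sos}), which converts an unconditional width lower bound into an explicit construction of the SDP vectors. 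The width lower bound is obtained via a Ben-Sasson--Wigderson complexity measure $\kappa(\nu) = \dist(h_\nu, B_2)$ on the refutation DAG: one shows $\kappa$ is subadditive, jumps to at least $r$ at the root (using vanishing local homology inside the injectivity radius), and thus an intermediate node has $\kappa(\nu) \in [r/2, r)$; there, Gromov's quadratic filling gives $|\partial h_\nu| \gtrsim \kappa(\nu)^{1/2} \gtrsim \sqrt{r}$. Your proposal contains none of this machinery, and the unsolved PSD coupling you point to is exactly what this machinery replaces.

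You also slightly misattribute the source of local satisfiability: you credit Gromov's isoperimetric inequality with showing that small subcomplexes have vanishing cohomology, but in the paper this is a consequence of the injectivity radius being $\Theta(\log n)$ together with balls in the building being contractible (a CAT(0) fact, \cref{lem:balls-building}). Gromov's filling inequality plays a different, quantitative role: it bounds $|\partial h|$ from below by $\Omega(|h|^{1/2})$ (modulo $B_2$), which is what forces the width to be $\Omega(\sqrt{r})$ rather than $\Omega(r)$. So the $\sqrt{\cdot}$ in the number of levels comes from the quadratic exponent in Gromov's inequality, not from a ``threshold'' in the vanishing of local cohomology, which holds all the way up to radius $r = \Theta(\log n)$.
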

We also remark that our construction can be used to obtain explicit integrality gap instances for various other optimization problems, using reductions in the SoS hierarchy~\cite{Tulsiani2009}. In particular, while our instances on the LSV complexes exhibit an integrality gap of $1-\mu$ vs.\ $1$ for the SoS hierarchy, reductions can be used to obtain explicit $3$XOR instances with a gap of $1/2+\epsilon$ vs.\ $1-\epsilon$ for any $\epsilon > 0$. Indeed, this yields explicit hard instances with optimal gaps for all approximation resistant predicates based on pairwise independent subgroups~\cite{Chan2016}.

\paragraph{Structured instances from High-dimensional expanders.}
High-dimensional expanders (HDXs) are a high-dimensional analog of expander graphs.
In recent years they have found a variety of applications in theoretical computer science, such as efficient CSP optimization~\cite{AlevJT2019}, improved sampling algorithms~\cite{AnariLGV2019,AnariLG2020,AlevL2020}, quantum LDPC codes~\cite{EvraKZ2020,KaufmanT2020}, novel lattice constructions~\cite{KaufmanM2018}, direct sum testing~\cite{GotlibK2019}, and others. Explicit constructions of HDXs have also led to improved list-decoding algorithms~\cite{DinurHKNT2019,AlevJQST2020} and to sparser agreement tests~\cite{DinurK2017,DiksteinD2019}. In this work, we show how these explicit constructions can be used to construct explicit hard instances for SoS.

High-dimensional expanders are bounded-degree (hyper)graphs (or rather, simplicial complexes) with certain expansion properties. A simplicial complex is a non-empty collection of down-closed sets. Given a simplicial complex $X$, we will refer by $X(i)$ the family of all $i$-dimensional sets in $X$ (i.e., sets of size $i+1$). The dimension of the simplicial complex $X$ is the maximal dimension of any set in it. It will be convenient to refer to the sets of dimension 0, 1, 2, 3 as vertices, edges, triangles, tetrahedra, respectively. Thus, a graph $G=(V,E)$ is a 1-dimensional complex, while in this work we will be using complexes of dimension at least 2. Given a 2-dimensional complex $X= (X(0), X(1),X(2))$, there are two natural ways to construct a 3XOR instance based on $X$ --- a vertex-variable construction and an edge-variable construction.
Let $\beta \colon X(2) \to \F_2$ be any $\F_2$-valued function on the set $X(2)$ of triangles.
\begin{description}
\item[Vertex-variable construction:]
  The 3XOR instance corresponding to $(X,\beta)$ consists of the following constraints: $x_u + x_v + x_w = \beta_{\{u,v,w\}}$ for each $\{u,v,w\} \in X(2)$.
\item[Edge-variable construction:]
  The 3XOR instance corresponding to $(X,\beta)$ consists of the following constraints: $x_{\{u,v\}} + x_{\{v,w\}} + x_{\{w,u\}} = \beta_{\{u,v,w\}}$ for each $\{u,v,w\} \in X(2)$.
\end{description}

The vertex-variable construction whose underlying structure is a high-dimensional expander has been studied by Alev, Jeronimo and the last author~\cite{AlevJT2019}. They gave an efficient algorithm for approximating vertex-variable constraint satisfaction problems (not necessarily 3XOR) on an underlying high-dimensional expander. Their result is a generalization to higher dimensions of the corresponding result for graphs that ``CSPs are easy on expanders''~\cite{BarakRS2011,GuruswamiS2011}. They prove this by showing that certain types of random walks on vertices converge very fast on high-dimensional expanders. However, the same analysis fails to show a similar result for the edge-variable construction, as the corresponding random walk on edges of a high-dimensional expander does not mix. Our work shows that this difference isn't just a technical limitation of their analysis; it is inherent. The edge-variable variant is truly hard, at least for SoS. This demonstrates an interesting subtlety in the structure of high-dimensional expanders, and how it relates to optimization.

To understand our edge-variable construction better, it will be convenient to set up some notation. Let $C^i$ denote the set of all $\F_2$-valued functions on $X(i)$. For each $0 \leq i < d$, consider the operator $\delta_i \colon C^i \to C^{i+1}$ defined as follows:
\[
 \delta_if(s) := \sum_{u\in s} f(s-\set u).
\]
This is usually referred to as the coboundary operator. Let $B^i$ be the image of $\delta_{i-1}$, and let $Z^i$ be the kernel of $\delta_i$. Clearly, $B^i, Z^i \subseteq C^i$. Furthermore, it is not hard to see that $B^i \subseteq Z^i$. It easily follows from the definitions that the edge-variable construction corresponding to $(X,\beta)$ is a satisfiable instance iff $\beta \in B^2$.



Typically, soundness of SoS-hard instances is proved by choosing $\beta$ at random from $C^2$. In contrast, we construct our explicit instances by choosing the function $\beta$ more carefully, and relying on a certain type of expansion property of the complex. Recall that $B^2 \subset Z^2$, and the instance is satisfiable iff $\beta \in B^2$. Complexes for which $B^2 = Z^2$ are said to have trivial second cohomology. We will be working with complexes with non-trivial second cohomology, i.e., $B^2 \neq Z^2$. This lets us choose a $\beta \in Z^2 \setminus B^2$ to prove soundness. It is known that the explicit constructions of HDXs due to Lubotzky, Samuels and Vishne~\cite{LubotzkySV2005-exphdx,LubotzkySV2005-hdx} have non-trivial second cohomology.\footnote{More accurately, their construction depends on the group defining the quotient. They show that a certain choice of groups yields non-trivial second cohomology.} In fact, these complexes have the stronger property (due to a theorem of Evra and Kaufman~\cite{EvraK2016}) that all $\beta \in Z^2 \setminus B^2$ are not only not in $B^2$, but in fact far from any function in $B^2$. This latter property follows from the cosystolic expansion of the complex, and forms the basis for the soundness of our instances.

How do we prove the completeness of our instance, namely, that SoS fails to detect that it is a negative instance? The LSV construction is a quotient of the so-called affine building which is, from a topological point of view, a simple ``Euclidean-like'' object with trivial cohomologies. The hardness of our instance comes from the inherent difference between the LSV complex and the building, which cannot be seen through local balls whose radius is at most the injectivity radius of the complex, in our case $\Theta(\log n)$. Locally, the LSV quotient is isomorphic to the building. However, unlike the building, the LSV complex is a quotient with non-trivial cohomologies. The hardness comes from the fact that local views cannot capture the cohomology, which is a global property. Given this observation, the proof of completeness can be carried out following the argument of Ben-Sasson and Wigderson~\cite{BenSassonW2001} that any short resolution proof is narrow, and Grigoriev~\cite{Grigoriev2001} and Schoenebeck~\cite{Schoenebeck2008}'s transformation from resolution lower bounds to SoS lower bounds.

Technically, we rely on two very different types of expansion or isoperimetry. In our proof of completeness, we rely on an isoperimetric inequality called Gromov's filling inequality, that says that balls are essentially the objects with smallest boundary in any CAT(0) space (a class of spaces that includes both Euclidean spaces and the affine building). In our proof of soundness, we rely on the cosystolic expansion of the LSV complex, as proven by Evra and Kaufman~\cite{EvraK2016}, which implies that any non-trivial element in the cohomology has constant weight. Both of these statements are related to expansion, yet they are distinct from other notions of expansion used in previous SoS lower bounds.

\paragraph{Relation to previous SoS gap constructions.} All previous constructions of hard instances for SoS can be viewed in the vertex/edge-variable framework (typically vertex-variable). To the best of our knowledge, all known hard instances, proving inapproximability in the SoS hierarchy, are \emph{random} instances; either both the complex $X$ and the function $\beta$ are random, or just the function $\beta$ is random. Explicit hard instances for SoS are known in proof complexity (e.g., Tseitin tautologies on expanders), however, we do not know how to transform these hard instances into ones for inapproximability. The proof of SoS hardness of these random instances relies on very strong expansion of the underlying complex~\cite{Schoenebeck2008} or on certain pseudorandom properties~\cite{KothariMOW2017}, both of which are not yet known to be explicitly constructible.

In contrast, our instances are ``anti-random''. They are very structured and easily distinguishable from random instances. For example, all balls around a vertex up to some radius are identical and have very specific structure. Naturally, the typical  analysis that works for random instances cannot work here. For example, soundness for random instances is based on choosing a random $\beta$ and using a union-bound argument to show that with high probability, every solution violates nearly half of the constraints. In contrast, for us, a random $\beta$ is not a good choice because the local structure will quickly detect local contradictions, ruining the completeness altogether.
\paragraph{Open directions.} Our construction of explicit hard SoS instances based on HDXs begs several questions, some of which we discuss below.
\begin{description}
\item[Improved soundness] Our construction yields 3XOR hard instances which are at most $(1-\mu)$-satisfiable, owing to the cosystolic expansion of the underlying HDX (more precisely, $\Syst^2(X) \geq \mu$, see \cref{sec:simplicial} for the definition of $\Syst^2$). Coupled with reductions in the SoS hierarchy~\cite{Tulsiani2009}, this yields 3XOR hard instances which are at most $(1/2 +\eps)$-satisfiable for every $\epsilon \in (0,1)$. Can we obtain such a result directly from the HDX construction (bypassing reductions), say by constructing HDXs which satisfy $\Syst^2(X)\geq 1/2 - \eps$? In addition to maintaining the HDX structure, bypassing reductions would also allow for perfect completeness, which is lost while using NP-hardness reductions. 
\item[Fooling more levels of the SoS hierarchy] Our hard instances fool only $O(\sqrt{\log n})$ levels of the SoS hierarchy, as our argument is based on the injectivity radius of the complexes, which is $O(\log n)$, and we suffer a further square-root loss due to the use of Gromov's isoperimetry inequality. It is possible that a much stronger lower bound holds for these instances. Can one construct explicit hard instances that fool linearly many levels of the SoS hierarchy?
\item[HDX dimension and CSP definition] We find the contrast between the vertex-variable and edge-variable constructions baffling: while the vertex-variable construction is easy, our result demonstrates the hardness of the edge-variable construction. As we go to higher dimensions of HDX, there are more ways to define CSPs. Which of these are easy and which are hard?
\end{description}

\section{Preliminaries}
\subsection{The Sum-of-Squares hierarchy}
The sum-of-squares hierarchy\footnote{For more on Sum-of-Squares, see the recent monograph by Fleming, Kothari and Pitassi~\cite{FlemingKP2019}.}  provides a hierarchy of semidefinite programming (SDP) relaxations, for various combinatorial optimization problems. \Cref{fig:sos} describes the relaxation given by $t$ levels of the hierarchy for an instance $\mathcal{I}$ of 3XOR in $n$ variables, with $m$ constraints of the form $x_{i_1}+x_{i_2}+x_{i_3} = \beta_{i_1i_2i_3}$ over $\F_2$. We also use $\mathcal{I}$ to denote the set of all tuples $\{i_1,i_2, i_3\}$ present as constraints. A solution to the relaxation is specified by a collection of unit vectors $\left\{\U{S}\right\}_{S \subseteq [n], \abs{S} \leq t}$, satisfying the constraints in the program. The objective equals the fraction of constraints ``satisfied'' by the SDP solution.
\begin{figure}[ht]
\hrule
\vline
\begin{minipage}[t]{0.99\linewidth}
\vspace{-5 pt}
{\small
\begin{align*}
\mbox{maximize}\quad ~~
\frac12 + \frac{1}{2m} \cdot &\sum_{\{i_1,i_2,i_3\} \in \mathcal{I}} (-1)^{\beta_{i_1i_2i_3}} \cdot \ip{\U{\{i_1,i_2,i_3\}}, \Uempty} \\
\mbox{subject to}\quad
 \ip{\U{S_1}, \U{S_2}}
  &~=~ \ip{\U{S_3}, \U{S_4}}
  & \forall~ S_1 \Delta S_2 = S_3 \Delta S_4, \abs{S_1}, \ldots, \abs{S_4} \leq t \\
 \norm{\U{S}} &~=~ 1 & \forall S, \abs{S} \leq t
\end{align*}}
\vspace{-14 pt}
\end{minipage}
\hfill\vline
\hrule
\caption{Relaxation for 3XOR given by $t$ levels of the SoS hierarchy}
\label{fig:sos}
\end{figure}

To prove a lower bound on the value of the SDP relaxation, we will use the following result, which shows the existence of vectors $\U{S}$ yielding an objective value of 1, when the given system of XOR constraints does not have any ``low-width'' refutations. Formally, we consider a system called $\oplus$-resolution, where the only rule allows us to combine two equations $\ell_1=b_1$ and $\ell_2=b_2$ to derive the equation $\ell_1+\ell_2=b_1+b_2$. A refutation is a derivation of $0=1$. The width of a refutation is the maximum number of variables in any equation used in the refutation. We include a proof of the following lemma in \Cref{sec:sos-width}.
\begin{lemma}[{\cite[Lemma~13]{Schoenebeck2008}}, {\cite[Theorem~4.2]{Tulsiani2009}}]\label{lem:sos}
Let $\Lambda$ be a system of equations in $n$ variables over $\F_2$, which does not admit any refutations of width at most $2t$. Then there exist vectors $\left\{\U{S}\right\}_{S \subseteq [n], \abs{S} \leq t}$ satisfying the constraints in \Cref{fig:sos}, such that for all equations $\sum_{i \in T} x_i = b_T$ in $\Lambda$ with $\abs{T} \leq t$, we have $\ip{\U{T},\Uempty} = (-1)^{b_T}$.
\end{lemma}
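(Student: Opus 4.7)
The plan is to construct the vectors $\U{S}$ directly from a Gram matrix whose entries are dictated by which $\oplus$-resolution derivations fit under the width budget. For a subset $T \subseteq [n]$ with $|T| \le 2t$, I will call $T$ \emph{reachable} if there exists $b_T \in \F_2$ such that the equation $\sum_{i \in T} x_i = b_T$ is derivable from $\Lambda$ by $\oplus$-resolution of width at most $2t$. By the hypothesis that $\Lambda$ admits no width-$2t$ refutation, $0=1$ is not derivable in width $\le 2t$, so $b_T$ is well-defined whenever $T$ is reachable: two derivations of $T$ assigning distinct values would sum to a width-$\le 2t$ derivation of $0 = 1$.

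Next, I will introduce the relation $\sim$ on subsets of $[n]$ of size at most $t$ defined by $S_1 \sim S_2$ iff $S_1 \Delta S_2$ is reachable. Reflexivity and symmetry are immediate, and the key point is transitivity: given width-$\le 2t$ derivations of $S_1 \Delta S_2$ and $S_2 \Delta S_3$, I concatenate them and append one $\oplus$-resolution step to obtain a derivation of $(S_1 \Delta S_3, b_{S_1 \Delta S_2} + b_{S_2 \Delta S_3})$; every equation in this combined derivation has width $\le 2t$, including the final one since $|S_1 \Delta S_3| \le |S_1| + |S_3| \le 2t$. The same combination argument, together with the uniqueness of $b$, establishes the additive identity $b_{S_0 \Delta S_1} + b_{S_0 \Delta S_2} = b_{S_1 \Delta S_2}$ within any $\sim$-equivalence class.

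Now I will set up the Gram matrix $M[S_1, S_2] = (-1)^{b_{S_1 \Delta S_2}}$ when $S_1 \sim S_2$, and $M[S_1, S_2] = 0$ otherwise. This matrix is block-diagonal indexed by equivalence classes, and each block is rank-one and PSD: fixing a representative $S_0$ in a class $C$ and setting $v_S = (-1)^{b_{S_0 \Delta S}}$ for $S \in C$, the additive identity yields $M[S_1, S_2] = v_{S_1} v_{S_2}$. A Cholesky decomposition of $M$ then produces unit vectors $\U{S}$ with $\ip{\U{S_1}, \U{S_2}} = M[S_1, S_2]$, and this quantity depends only on $S_1 \Delta S_2$, so the consistency and normalization constraints in \Cref{fig:sos} hold (the norm condition follows because $M[S,S]=(-1)^{b_\emptyset}=1$). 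Finally, each equation $\sum_{i \in T} x_i = b_T$ of $\Lambda$ with $|T| \le t$ is trivially reachable through its zero-step derivation, whence $\ip{\U{T}, \Uempty} = (-1)^{b_T}$ as required.

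The main point to verify carefully is the transitivity of $\sim$: I need to confirm the combined derivation stays within width $2t$ on every line, not merely at its conclusion. This is where the factor of two between the SoS level $t$ and the resolution width $2t$ is essential, as it is exactly the slack needed to keep the symmetric difference of two degree-$\le t$ sets within the width budget; everything else in the argument reduces to routine linear algebra over $\F_2$.
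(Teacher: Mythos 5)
Your proof is correct and follows essentially the same approach as the paper: you use the same equivalence relation based on width-$2t$ derivability, the same additive/uniqueness argument for the $b$-values, and the same rank-one block structure (the paper's explicit assignment $\U{S} = (-1)^{b_S} e_{i}$ is just a hands-on Cholesky factorization of the Gram matrix you build). The only cosmetic difference is that you phrase things in terms of ``reachable'' sets and an explicit Gram matrix rather than the closure of $\Lambda$ and direct vector construction, which amounts to the same thing.
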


\subsection{Simplicial complexes}\label{sec:simplicial}

A simplicial complex $X$ is a non-empty collection of sets (known as faces) which is closed downwards. The $i$-dimensional faces $X(i)$ are all sets of size $i+1$. The dimension of the complex is the maximal dimension of a face. Faces of that dimension are known as facets. Faces of dimensions $0,1,2,3$ are called vertices, edges, triangles, and tetrahedra, respectively.

Graphs are $1$-dimensional simplicial complexes.
The skeleton of a simplicial complex is the graph obtained by retaining only faces of dimension at most~$1$.

\paragraph{Links} Let $X$ be a $d$-dimensional simplicial complex. The link $X_s$ of a face $s \in X(i)$ is a simplicial complex of dimension $d-(i+1)$ given by $X_s(j) := \{ t : s \cup t \in X(j+i+1) \}$. In other words, $X_s$ contains all faces in $X$ which contain $s$, with $s$ itself removed.

\paragraph{Balls} Let $X$ be a simplicial complex. A ball of radius $r$ around a vertex $v$ is the subcomplex induced by all vertices at distance at most $r$ from $v$, as measured on the skeleton of $X$. That is, the subcomplex contains a face of $X$ if it contains all the vertices of the face.

\paragraph{Simplicial map} If $X$ and $Y$ are two simplicial complexes, then a simplicial map $\psi\colon Y \to X$ is a map from $Y(0)$ to $X(0)$ that maps faces to faces.

\paragraph{Chains}

Fix a $d$-dimensional simplicial complex $X$. Let $C^i = C^i(X,\F_2)$ be the set of all functions from $X(i)$ to $\F_2$. Elements of $C^i$ are also known as $i$-chains.

For an $i$-chain $f$, we define $|f|$ to be the number of non-zero elements in $f$. For two $i$-chains $f,g$, we define the distance between $f$ and $g$ to be $\dist(f,g) = |f+g|$.

\paragraph{Inner product}
For $f,f'\in C^i$, let us denote by $\ip{f,f'}_i$ the following sum modulo $2$:
\[\ip{f,f'}_i := \sum_{s\in X(i)}f(s)f'(s) .\]

This is not an inner product in the usual sense as we are working over a field of non-zero characteristic, but it is convenient notation. We will usually drop the subscript $i$. 

\paragraph{Dual space} Given any subspace $V \subset C^i$, the dual of $V$ (under $\ip{\cdot,\cdot}_i$) is defined as:
\[V^{\perp}:= \{ f \in C^i \mid \text{ for all } g \in V, \ip{f,g}_i = 0 \}.\]

\paragraph{Boundaries, Cycles, Homology}

The boundary operator $\dho_i\colon C^{i}\to C^{i-1}$ is given by
\[
 \dho_i f(s) := \sum_{t\in X(i)\colon t \supset  s} f(t).
\]
It gives rise to boundaries $B_i$ and cycles $Z_i$:
\[
 B_i := \im \dho_{i+1},\qquad Z_i:= \ker \dho_i.
\]
In the case of graphs, $Z_1$ consists of all sums of cycles (in the usual sense).

The coboundary operator $\delta_i\colon C^i\to C^{i+1}$, which is the adjoint of the boundary operator,  is given by
\[
 \delta_i f(t) := \sum_{s\in X(i) \colon s \subset t}f(s) = \sum_{u\in t} f(t-\set u).
\]
It gives rise to coboundaries and cocycles:
\[ B^i := \im \delta_{i-1},\qquad Z^i := \ker \delta_i. \]
We will usually drop the subscript $i$ when invoking $\dho,\delta$.

It is easy to see that $B_i\subset Z_i$ (every boundary is a cycle) and $B^i\subset Z^i$ (every coboundary is a cocycle). For example, in a $2$-dimensional complex, the boundary of every triangle is a cycle. We call such cycles \emph{trivial cycles}. Modding out by trivial cycles and cocycles, we obtain the homology and cohomology spaces
\[ H_i := Z_i/B_i,\qquad H^i := Z^i/B^i. \]
The dimensions of these spaces (which are identical) measure the number of ``holes'' in a particular dimension. Nice complexes (such as the buildings considered below) have no holes.

The following claim shows that that the coboundary operator is the adjoint of the boundary operator.
\begin{claim}\label{claim:dual}
Let $f\in C^i, g\in C^{i-1}$. Then $\ip{f,\delta g}_i=\ip{\dho f,g}_{i-1}$.
\end{claim}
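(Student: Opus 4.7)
The plan is to verify the identity by unfolding both sides using the definitions of $\delta$ and $\partial$ and then exchanging the order of summation. Both sides should reduce to the same double sum over incident pairs $(s,t)$ with $s \in X(i-1)$, $t \in X(i)$, and $s \subset t$. This is the $\mathbb{F}_2$ analogue of the usual adjointness between the boundary and coboundary operators; over $\mathbb{F}_2$ there are no sign issues to worry about, so the calculation should be entirely mechanical.

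More concretely, I would first expand the left-hand side. By the definition of $\delta g$ applied to a face $t \in X(i)$, we have $\delta g(t) = \sum_{s \in X(i-1) : s \subset t} g(s)$, so
\[
\langle f, \delta g\rangle_i \;=\; \sum_{t \in X(i)} f(t)\, \delta g(t) \;=\; \sum_{t \in X(i)} \sum_{\substack{s \in X(i-1) \\ s \subset t}} f(t)\, g(s).
\]
Then I would expand the right-hand side in the same way: by definition of $\partial f$ on a face $s \in X(i-1)$, we have $\partial f(s) = \sum_{t \in X(i) : t \supset s} f(t)$, whence
\[
\langle \partial f, g\rangle_{i-1} \;=\; \sum_{s \in X(i-1)} \partial f(s)\, g(s) \;=\; \sum_{s \in X(i-1)} \sum_{\substack{t \in X(i) \\ t \supset s}} f(t)\, g(s).
\]

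Finally, I would observe that both double sums range over exactly the same index set, namely pairs $(s,t)$ with $s \in X(i-1)$, $t \in X(i)$, and $s \subset t$, and that the summand $f(t)g(s)$ is the same in both cases. Swapping the order of summation in one of them therefore yields the other, which proves the identity. There is no real obstacle here; the only thing to be careful about is ensuring the incidence conditions $s \subset t$ match up correctly on the two sides, which they do by construction.
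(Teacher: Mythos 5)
Your proof is correct and follows essentially the same route as the paper's: expand $\delta_{i-1}g$ by definition and swap the order of summation over incident pairs $(s,t)$. The only cosmetic difference is that you expand both sides and match them, whereas the paper rewrites the left-hand side directly into the right-hand side.
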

\begin{proof}\begin{align*}
    \ip{f,\delta_{i-1} g}_i & = \sum_{t \in X(i)} f(t) \cdot \delta_{i-1} g(t) = \sum_{t \in X(i)} f(t) \cdot \left(\sum_{s \in X(i-1) \colon s\subset t} g(s)\right) \\
                      & =\sum_{s \in X(i-1)} \left( \sum_{t \in X(i) \colon t \supset s} f(t) \right) \cdot g(s) = \ip{\dho_i f,g}_{i-1}. \qedhere
                        \end{align*}
  \end{proof}

  The following claims shows the dimensions of homology and cohomology spaces are identical.
  \begin{claim}\label{claim:ZBduals}
    $Z_i = (B^i)^\perp, \qquad Z^i = (B_i)^\perp$.
  \end{claim}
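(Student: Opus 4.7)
The plan is to prove both identities by a straightforward application of \Cref{claim:dual} together with the fact that the bilinear form $\ip{\cdot,\cdot}_i$ is non-degenerate on $C^i$.

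First I would handle $Z_i = (B^i)^\perp$. By definition, $f \in (B^i)^\perp$ iff $\ip{f, h}_i = 0$ for every $h \in B^i$, equivalently iff $\ip{f, \delta g}_i = 0$ for every $g \in C^{i-1}$. Applying \Cref{claim:dual}, this is equivalent to $\ip{\dho f, g}_{i-1} = 0$ for every $g \in C^{i-1}$. Since $\ip{\cdot,\cdot}_{i-1}$ is non-degenerate (for any face $s \in X(i-1)$, pairing with the indicator $\one_s$ extracts the value at $s$), this holds iff $\dho f = 0$, i.e., $f \in Z_i$.

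The second identity $Z^i = (B_i)^\perp$ is symmetric. I would note that $f \in (B_i)^\perp$ iff $\ip{f, \dho g}_i = 0$ for every $g \in C^{i+1}$. By \Cref{claim:dual} applied one dimension up, $\ip{\dho g, f}_i = \ip{g, \delta f}_{i+1}$, so this condition becomes $\ip{g, \delta f}_{i+1} = 0$ for all $g \in C^{i+1}$, which by non-degeneracy is equivalent to $\delta f = 0$, i.e., $f \in Z^i$.

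There is no real obstacle here; the only thing to be careful about is that we are working over $\F_2$, so ``non-degeneracy'' must be verified directly rather than appealing to positive-definiteness, but this is immediate from the observation that pairing with an indicator function $\one_s$ extracts the coordinate at $s$. Once that is noted, the proof is essentially a one-line chain of equivalences in each direction.
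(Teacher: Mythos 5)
Your proof is correct and follows essentially the same route as the paper: the paper's one-line chain $Z_i = \ker\partial_i = \ker\delta_{i-1}^* = (\im\delta_{i-1})^\perp = (B^i)^\perp$ is exactly the adjointness-plus-non-degeneracy argument you spell out explicitly. The only difference is that you unpack the standard linear-algebra identity $\ker A^* = (\im A)^\perp$ and verify non-degeneracy over $\F_2$ directly, which is a fine (and slightly more self-contained) way to present the same idea.
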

  \begin{proof} $Z_i = \ker \partial_i = \ker \delta_{i-1}^* = (\im \delta_{i-1})^\perp = (B^i)^\perp$. \qedhere

%
    \end{proof}

\begin{claim}\label{claim:homeqcohom}
  $\dim H_i = \dim H^i$
\end{claim}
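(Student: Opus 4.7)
The plan is to reduce the claim to a dimension count using \cref{claim:ZBduals}. The key observation is that the bilinear form $\ip{\cdot,\cdot}_i$ on $C^i$ is non-degenerate over $\F_2$: if $f\in C^i$ satisfies $\ip{f,f'}_i=0$ for every $f'\in C^i$, then taking $f'$ to be the indicator of a single face forces $f(s)=0$ for all $s\in X(i)$. Non-degeneracy implies the standard formula $\dim V + \dim V^{\perp} = \dim C^i$ for any subspace $V\subseteq C^i$.

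First I would apply this formula together with \cref{claim:ZBduals} to obtain
\[
  \dim Z_i \;=\; \dim C^i - \dim B^i, \qquad \dim Z^i \;=\; \dim C^i - \dim B_i.
\]
Then I would substitute these into the definitions of the (co)homology dimensions:
\[
  \dim H_i \;=\; \dim Z_i - \dim B_i \;=\; \dim C^i - \dim B^i - \dim B_i,
\]
\[
  \dim H^i \;=\; \dim Z^i - \dim B^i \;=\; \dim C^i - \dim B_i - \dim B^i,
\]
and observe that the two right-hand sides are identical, giving $\dim H_i = \dim H^i$.

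There is no real obstacle here; the only subtle point is making sure one is justified in using $\dim V + \dim V^{\perp} = \dim C^i$, since $\ip{\cdot,\cdot}_i$ is only a bilinear form (indeed over $\F_2$ it is not positive definite and vectors can be self-orthogonal). The justification is precisely the non-degeneracy argument above, which only requires that the indicators of single faces span $C^i$.
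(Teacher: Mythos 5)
Your argument is correct and is essentially the paper's proof: both reduce the claim to the equalities $\dim Z_i = \dim C^i - \dim B^i$ and $\dim Z^i = \dim C^i - \dim B_i$ via \cref{claim:ZBduals} and the rank--nullity count $\dim V + \dim V^\perp = \dim C^i$, and then cancel. You are slightly more explicit in spelling out why the non-degenerate-form dimension formula is legitimate over $\F_2$, which the paper leaves implicit, but this is a presentational difference rather than a different route.
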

\begin{proof}\begin{align*}
    \dim H_i & = \dim Z_i - \dim B_i\\
             &= \dim C^i - \dim B^i - \dim B_i && [\text{By \cref{claim:ZBduals}}]\\
             & = \dim Z^i - \dim B^i && [\text{By \cref{claim:ZBduals}}]\\
             &= \dim H^i. && \hfill \qedhere
  \end{align*}
\end{proof}


\paragraph{Cosystoles} We define, following Evra and Kaufman~\cite[Definition 2.14]{EvraK2016}, the $i$-cosystole of a complex $X$ to be the minimal (fractional) size of $f\in Z^i\setminus B^i$,
\[ \Syst^i(X) := \min_{f\in Z^i\setminus B^i} |f|/|X(i)|.
\]



\subsection{The building $\B^{(d+1)}$}

The infinite $k$-regular tree is the unique connected $k$-regular graph without cycles. Affine buildings are higher-dimensional analogs of the infinite $k$-regular tree. For $d=1$, the one-dimensional affine building $\B^{(1)}$ is the $k$-regular tree. For higher dimensions
%
they are \emph{regular} in the sense that all vertex links are bounded and identical in structure, they are connected and contractible,\footnote{A complex is contractible, roughly speaking, if it can be continuously deformed to a point (technically, it is homotopy-equivalent to a point). Since (co)homologies are preserved by such deformations, all (co)homologies of a contractible complex vanish. } and so have vanishing cohomologies, that is, the cohomology spaces $H^1,\ldots,H^{d-1}$ are trivial, where $d$ is the dimension.


We won't describe $\B^{(d+1)}$ any further; the interested reader can check~\cite{Ji2012,AbramenkoBrown}. A crucial property of $\B^{(d+1)}$ which we will need in the sequel is its being a CAT(0) space,\footnote{A space is CAT(0) if for every triangle $x,y,z$, the distance between $x$ and the midpoint of $y,z$ is at most the corresponding distance in a congruent triangle in Euclidean space.} which is a geometric definition capturing non-positive curvature; see~\cite{BridsonHaefliger} for more information.
The property of being CAT(0) has the following implication, due to Gromov~\cite{Gromov1983,Guth2006,Wenger2008}: 
\begin{theorem}[Gromov's filling inequality for CAT(0) spaces]
For every cycle $f\in Z_1$ there is a filling $g\in C_2$ such that $f = \partial g $ and $|g| = O(|f|^2)$.
\end{theorem}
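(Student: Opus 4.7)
The plan is to use a classical geodesic cone construction, adapted to the simplicial setting. First, I would reduce to the case of a simple cycle: over $\F_2$, the condition $\partial f = 0$ forces every vertex in the support of $f$ to be incident to an even number of edges of $f$, so by an Eulerian decomposition the support of $f$ is an edge-disjoint union of simple cycles $c_1,\ldots,c_k$ with $\sum_j |c_j| = |f|$. If each $c_j$ admits a filling $g_j$ with $|g_j| = O(|c_j|^2)$, then $g := \sum_j g_j$ satisfies $\partial g = f$ and $|g| \leq \sum_j O(|c_j|^2) \leq O(|f|^2)$.

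For a simple cycle $c$ of length $L$ with vertices traversed as $v_0, v_1, \ldots, v_{L-1}$, I would fix $v_0$ as a basepoint and, for each $i$, choose a geodesic path $\gamma_i$ in the $1$-skeleton from $v_0$ to $v_i$; by traversing the shorter arc of $c$ if necessary, we may assume $|\gamma_i| \leq L/2$. For each edge $e_i = \{v_i, v_{i+1}\}$ of $c$, the paths $\gamma_i$, $e_i$, $\gamma_{i+1}$ form a closed geodesic triangle, which I would fill by a $2$-chain $T_i$ of size $O(\max(|\gamma_i|, |\gamma_{i+1}|))$. Setting $g_c := \sum_i T_i$, each internal geodesic edge appears in two consecutive fillings and cancels modulo~$2$, so $\partial g_c = c$, while $|g_c| \leq \sum_{i=0}^{L-1} O(|\gamma_i|) = O(L^2)$, as required.

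The main obstacle is the filling of each geodesic triangle: producing a simplicial $2$-chain of linear size whose boundary is $\gamma_i + e_i + \gamma_{i+1}$. The CAT(0) hypothesis supplies the metric input. Applying the comparison inequality to the Euclidean triangle with side lengths $|\gamma_i|$, $|\gamma_{i+1}|$, $1$, I see that $\bigl||\gamma_i| - |\gamma_{i+1}|\bigr| \leq 1$ and that points on the two geodesics at equal distance from $v_0$ remain within distance at most~$1$ of each other. Hence at each ``level'' $k$ the vertices $\gamma_i(k)$ and $\gamma_{i+1}(k)$ are either equal or adjacent in $\B^{(d+1)}$; I would then use the structural fact that any two adjacent vertices of the building lie inside a common apartment (a Euclidean simplicial complex) to fill each quadrilateral between consecutive levels by a constant number of $2$-simplices within an apartment. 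Zippering these fillings up the two geodesics yields the required linear-size $2$-chain, completing the quadratic upper bound.
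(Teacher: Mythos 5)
The paper does not prove this theorem at all; it simply quotes it, citing Gromov~\cite{Gromov1983}, Guth~\cite{Guth2006} and Wenger~\cite{Wenger2008}, and treats it as a black box via \Cref{lem:balls-building}. So there is no in-paper proof to compare against --- you are proposing a proof where the paper has only a citation.

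That said, your geodesic cone construction is indeed the standard underlying idea behind the quadratic filling inequality in CAT(0) spaces (see, e.g., Bridson--Haefliger), so the approach is the right one in spirit. The reduction to simple cycles via Eulerian decomposition and the convexity estimate $\sum_j |c_j|^2 \le (\sum_j |c_j|)^2$ are fine. The gap is in the discretization step, and it is not a small one. The CAT(0) comparison inequality controls the \emph{metric} distance between the points $\gamma_i(k)$ and $\gamma_{i+1}(k)$, but these are points on CAT(0) geodesics, which generically pass through the interiors of simplices and are not vertices of $\B^{(d+1)}$ at all; alternatively, if you insist that $\gamma_i$ be a combinatorial geodesic in the $1$-skeleton, then it is no longer a CAT(0) geodesic and you cannot directly apply the comparison inequality to it. Either way, ``metric distance $\le 1$'' does not translate to ``equal or adjacent in the $1$-skeleton'' without an additional lemma relating the combinatorial metric to the CAT(0) metric, with constants that must be tracked. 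Relatedly, the claim that each quadrilateral between consecutive levels can be filled by $O(1)$ simplices ``within an apartment'' is asserted rather than proved: two adjacent vertices lie in a common apartment, but the four corners of your quadrilateral are not a priori coplanar in that sense, and the constant number of triangles needed depends on the (dimension-dependent) combinatorics of the building. These are exactly the technical points that Guth's and Wenger's treatments handle, and they are the reason the paper chose to cite the result rather than reprove it. Your sketch would need an explicit quasi-isometry lemma between the simplicial $1$-skeleton metric and the CAT(0) metric, and a local filling lemma for bounded-diameter contractible subcomplexes of the building, before it could be called a proof.
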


Gromov's filling inequality is an isoperimetric inequality. It generalizes the classic isoperimetric inequality in the plane, which states that any simple closed curve of length $L$ encloses a region whose area is at most $L^2/4\pi$.

The isoperimetric inequality in the plane can be stated in an equivalent way: the boundary of any \emph{bounded} region of area $A$ is a curve whose length is at least $\sqrt{4\pi A}$. This inequality fails for unbounded regions, which could have infinite area but finite boundary (for example, consider the complement of a circle). In the same way, Gromov's inequality doesn't imply that each $g \in C_2$ satisfies $|\partial g| = \Omega(\sqrt{|g|})$. Rather, we have to replace $|g|$ with $\min_{h \in Z_2} |g+h|$.

Gromov's filling inequality also applies to $i$-chains, with an exponent of $i+1$, but we will only need the case $i=1$.

\medskip

In the sequel, we will apply Gromov's filling inequality not to the building itself, but rather to balls in the building. The CAT(0) property almost immediately implies that a ball in a CAT(0) space is itself CAT(0)~\cite[Exercise II.1.6]{BridsonHaefliger}. Furthermore, it is well-known that CAT(0) spaces are contractible, and so have vanishing homologies.

\begin{lemma} \label{lem:balls-building}
Balls in $\B^{(d+1)}$ have vanishing homologies and satisfy Gromov's filling inequality.
\end{lemma}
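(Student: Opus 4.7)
My plan is to deduce both conclusions directly from the CAT(0) structure of the ambient building, using the facts already cited in the surrounding paragraphs.

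First I would observe that $\B^{(d+1)}$ is CAT(0), and invoke the cited result from Bridson--Haefliger (Exercise II.1.6) to conclude that each (closed) ball $B$ around a vertex $v$, equipped with the induced metric, is itself a CAT(0) space. Once this is in hand, Gromov's filling inequality (the theorem stated just before the lemma) applies to $B$ verbatim, giving the second assertion.

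For the first assertion, I would appeal to the standard fact (noted in the footnote just above) that every CAT(0) space is contractible: the geodesic retraction to the basepoint $v$, which exists because geodesics in CAT(0) spaces are unique, provides an explicit homotopy from $B$ to $\{v\}$. Since singular homology is a homotopy invariant, all (reduced) homologies of $B$ vanish; by \cref{claim:homeqcohom} the cohomologies vanish as well, so in particular $B_i = Z_i$ and $B^i = Z^i$ on $B$ in all relevant dimensions.

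The main technical point to watch is the translation between the combinatorial ball defined earlier in \textbf{Balls} (the subcomplex induced by vertices at skeleton-distance $\le r$ from $v$) and the geometric metric ball in the CAT(0) realization to which Bridson--Haefliger and Gromov apply. For the affine building this identification is standard because the CAT(0) metric on $|\B^{(d+1)}|$ is built piecewise-Euclidean from the simplices and is quasi-isometric to the skeleton distance; the combinatorial ball of radius $r$ is contained in, and simplicially approximates, a geometric ball of comparable radius, so the CAT(0) consequences (contractibility and the filling inequality) transfer to the simplicial object up to adjusting the implicit constant in the $O(|f|^2)$ bound. This is the only place where one has to be careful; everything else is a direct invocation of the cited results.
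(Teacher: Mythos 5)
Your proposal matches the paper's implicit argument: the paper offers no separate proof of \cref{lem:balls-building}, instead deducing it in the paragraph immediately preceding from exactly the three facts you cite (metric balls in a CAT(0) space are CAT(0) by Bridson--Haefliger Exercise~II.1.6, CAT(0) spaces are contractible and hence have vanishing homology, and Gromov's filling inequality holds for CAT(0) spaces). Your closing caveat about the mismatch between the combinatorial ball (the induced subcomplex on skeleton-nearby vertices, which is what \cref{prop:boundary,prop:filling} actually use) and the geometric metric ball (the object to which the CAT(0) statements literally apply) flags a genuine wrinkle that the paper silently elides, so your write-up is, if anything, slightly more candid about the missing translation step than the source.
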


\subsection{The LSV quotient}\label{sec:LSV} 
Whereas the affine building is an infinite simplicial complex, Lubotzky, Samuels and Vishne constructed a growing family of finite complexes that are obtained from quotients of the affine building. These quotients have a growing number of vertices, and locally, in a ball around each vertex, the complex is isomorphic to the affine building. Moreover, they gave a very explicit algorithm for constructing these complexes by first constructing a Cayley graph with an explicit set of generators, and then the higher dimensional faces are simply the cliques in the Cayley graph.

\begin{theorem}[Lubotzky, Samuels, Vishne~{\cite[Theorem 1.1]{LubotzkySV2005-exphdx}}]
Let $q$ be a prime power, $d\ge 2$. For every $e>1$ the group $G = PGL_d(\mathbb{F}_{p^e})$ has an (explicit) set of $\sqbinom{d}{1}_q + \sqbinom{d}{2}_q + \ldots + \sqbinom{d}{d-1}_q$ generators, such that the Cayley complex of $G$ with respect to these generators is a Ramanujan complex $X$ covered by $\B^{(d)}(F)$ for $F=\mathbb{F}_q(\!(y)\!)$.
\end{theorem}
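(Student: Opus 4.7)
The plan is to produce an arithmetic lattice $\Gamma \subset PGL_d(F)$ that acts simply transitively on the vertices of $\B^{(d)}(F)$, and then realize finite quotients of $\Gamma$ as $PGL_d(\mathbb{F}_{p^e})$ via strong approximation. I would begin with the Bruhat--Tits description: with $O_F=\mathbb{F}_q[\![y]\!]$, vertices of $\B^{(d)}(F)$ are homothety classes of full-rank $O_F$-lattices in $F^d$, and two vertices represented by lattices $L \supsetneq L'$ are adjacent in the $1$-skeleton exactly when $yL \subsetneq L' \subsetneq L$. Consequently the neighbors of a fixed vertex are in bijection with proper non-trivial $\mathbb{F}_q$-subspaces of $L/yL \cong \mathbb{F}_q^d$, of which there are exactly $\sqbinom{d}{1}_q+\sqbinom{d}{2}_q+\cdots+\sqbinom{d}{d-1}_q$; this matches the generator count claimed by the theorem, since each generator of a simply-transitive $\Gamma$ will correspond to one edge out of the base vertex.

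Next I would construct $\Gamma$ from a central division algebra $D$ of dimension $d^2$ over the global function field $k = \mathbb{F}_q(y)$, chosen so that $D$ splits at the place $v_0 = 1/y$ (yielding an embedding $D^*/k^* \hookrightarrow PGL_d(F)$) and ramifies at two or more other places. The unit group of a maximal $O_{k,\{v_0\}}$-order, modulo centre, then acts cocompactly on $\B^{(d)}(F)$; refining this group following the explicit LSV recipe yields a subgroup $\Gamma$ acting \emph{simply transitively} on vertices. Its preferred generating set consists of the elements sending a fixed base vertex to each of its neighbors; these can be written down as explicit matrices in $PGL_d(F)$ arising from the local lattice combinatorics, and their number equals the link count above.

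To pass to the finite group, one applies strong approximation for $D^*$: for a suitable prime ideal $\mathfrak{p}$ in the affine coordinate ring of $k$ away from $v_0$, whose residue field is $\mathbb{F}_{p^e}$, the reduction map $\Gamma \twoheadrightarrow PGL_d(\mathbb{F}_{p^e})$ is surjective and its kernel is the principal congruence subgroup $\Gamma(\mathfrak{p})$. The Cayley complex of $PGL_d(\mathbb{F}_{p^e})$ with respect to the images of the chosen generators is then canonically identified with the finite quotient $\Gamma(\mathfrak{p}) \backslash \B^{(d)}(F)$, which by construction is a finite simplicial complex covered by the building and whose higher faces are the cliques in the $1$-skeleton (reflecting the fact that the building is a flag complex).

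The hardest step is verifying the Ramanujan property, namely that the non-trivial spectra of the natural colored Hecke-type operators on $X$ lie in the tempered range dictated by $\B^{(d)}(F)$. Via the Arthur--Selberg trace formula together with the Jacquet--Langlands correspondence, this reduces to the Ramanujan conjecture for cuspidal automorphic representations of $GL_d$ over a global function field, a deep theorem of L.~Lafforgue. This automorphic input is the main obstacle: one would invoke it rather than reprove it, and it is the reason the explicit form of the statement depends so delicately on the arithmetic choices of $D$ and $\mathfrak{p}$.
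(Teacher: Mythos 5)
This theorem is not proved in the paper at all: it is quoted as an external result (Theorem~1.1 of Lubotzky--Samuels--Vishne) and used as a black box, so there is no internal proof to compare your attempt against. That said, your outline does track the actual LSV route: a division algebra over $k=\mathbb{F}_q(y)$ split at the place corresponding to $F=\mathbb{F}_q(\!(y)\!)$, an arithmetic (Cartwright--Steger--type) lattice acting simply transitively on the vertices of $\B^{(d)}(F)$, whose natural generators biject with the $\sqbinom{d}{1}_q+\cdots+\sqbinom{d}{d-1}_q$ neighbours of a base vertex, congruence quotients identified with Cayley complexes of the finite group via strong approximation, and the Ramanujan property obtained from Lafforgue's theorem through Jacquet--Langlands. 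As a proof, however, what you have is a roadmap rather than an argument: the existence and explicit presentation of the simply transitive lattice is itself the hard content of Cartwright--Steger and of the LSV ``explicit'' paper (your phrase ``following the explicit LSV recipe'' is circular at exactly this point); the claim that the reduction map surjects onto $PGL_d(\mathbb{F}_{p^e})$ (rather than, say, $PSL_d$ or an intermediate group) with kernel the principal congruence subgroup requires specific arithmetic conditions on $d,q,e$ and a genuine strong-approximation argument, not just its statement; and the temperedness step is precisely Lafforgue's theorem, which you rightly invoke rather than prove. None of this needs to be redone for the present paper --- the correct move here is the citation --- but be aware that your sketch defers every substantive step to the literature rather than establishing it.
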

The precise definition of ``Ramanujan complex'' is not important for this context. For us, there are three important aspects of this theorem: efficient construction, local structure, and global structure. 
\begin{itemize}
\item {\bf Efficient construction:} Firstly, the fact that the complex is constructible in polynomial time.
\item {\bf Local structure:} Next, we highlight the fact that locally the complex looks like the building. The theorem states that the complex $X$ is covered by $\B^{(d)}$. A {\em covering map} maps a simplicial complex $Y$ {\em surjectively} to a simplicial complex $X$ by mapping the vertices $\psi:Y(0)\to X(0)$ such that for every $k\leq d$ the image of every $k$-face $\set {v_0,\ldots,v_k}\in Y(k)$ is a $k$-face $\set{\psi(v_0),\ldots,\psi(v_k)}\in X(k)$.

The fact that $X$ is covered by $\B^{(d)}$ means that the neighborhood of a vertex in $X$ and in $\B^{(d)}$ look exactly the same. It turns out that for the LSV complexes this continues to be true also for balls of larger radius around any vertex.  This is a higher-dimensional analog of the graph property of containing no short cycles (locally looking like a tree). Define the injectivity radius of $X$ to be the largest $r$ such that the covering map $\B^{(d)}\stackrel{\psi}{\rightarrow} X$ is injective from balls of radius $\leq r$ in $\B^{(d)}$ and the ball of radius $\leq r$ in $X$. We do not mention the center of the ball they are all isomorphic.
\begin{theorem}[Lubotzky and Meshulam \cite{LubotzkyM2007}, see also\footnote{The theorem was proven by \cite{LubotzkyM2007}. They stated their theorem using a slightly different definition for injectivity radius but one can prove that the two definitions coincide in this case. This was reproven in \cite{EvraGL2015} who use the definition of injectivity radius that is convenient for us.}
 {\cite[Corollary 5.2]{EvraGL2015}}]
Let $X$ be the LSV complex above. Then the injectivity radius $r(X)$ of $X$ satisfies 
\[ r(X) \ge \frac{\log_q|X|}{2(d-1)(d^2-1)}-\frac 1 2
\] where $|X|$ is the number of vertices in $X$.
\end{theorem}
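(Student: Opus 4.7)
The plan is to follow the standard strategy for injectivity-radius bounds on arithmetic quotients: translate injectivity into a displacement estimate on the universal cover, and then extract that estimate from the arithmetic of the defining lattice. Writing $X = \Gamma \backslash \B^{(d)}(F)$ with $F = \F_q(\!(y)\!)$ and $\Gamma$ the cocompact lattice supplied by the LSV construction (essentially a unit group of an order in a division algebra over $\F_q(y)$, or a quotient thereof), the covering map $\psi$ is injective on balls of radius $r$ if and only if no $\gamma \in \Gamma \setminus \{1\}$ moves any vertex of $\B^{(d)}$ by distance less than $2r$. Hence
\[
r(X) \;\ge\; \tfrac{1}{2}\inf_{\gamma \ne 1,\; x \in \B^{(d)}(0)} \dist(x,\gamma x) \;-\; \tfrac{1}{2},
\]
and it suffices to lower-bound the right-hand infimum.

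First I would use that $\B^{(d)}$ is CAT(0) and carries the Cartan decomposition, under which the translation length of a semisimple $\gamma \in PGL_d(F)$ equals $\max_i v(\lambda_i) - \min_i v(\lambda_i)$, where the $\lambda_i$ are the eigenvalues of a lift of $\gamma$ and $v$ is the $y$-adic valuation on an algebraic closure of $F$; the infimum above is attained at this translation length. So the task reduces to showing that no non-identity $\gamma \in \Gamma$ has all eigenvalue-valuations lying in an interval of length less than some explicit $\ell$. For this I would exploit that elements of $\Gamma$ are represented by matrices whose characteristic polynomials have coefficients in a controlled $\F_q[y]$-module (coming from the integral structure of the order). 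A translation length of $\ell$ forces the $y$-degrees of those coefficients to be $O(\ell)$, bounding the number of possible characteristic polynomials of $\gamma$; a pigeonhole/finiteness argument over conjugacy classes in $\Gamma$ then forces $\ell \ge \Omega(e/(d-1))$, with the factor $(d-1)$ tracking the $d$ eigenvalues that must multiply to $1$ in $PGL_d$.

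To finish, I would convert this displacement bound into the advertised bound in terms of $|X|$. Since $X$ is the Cayley complex of $G = PGL_d(\F_{p^e})$, we have $|X(0)| = |G|$ and hence $\log_q |X| = \Theta(e(d^2-1))$. Substituting gives $r(X) \ge \log_q|X| / (2(d-1)(d^2-1)) - 1/2$, matching the stated inequality. The main obstacle is the arithmetic step in the middle paragraph: pinning down the precise constants $(d-1)$ and $(d^2-1)$ requires identifying $\Gamma$ concretely enough to control $y$-degrees of characteristic polynomials and to count conjugacy classes, and this is exactly where the specific choice of division algebra and the Weil-restriction / Galois-descent realization of the LSV lattice enters. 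This arithmetic bookkeeping is the technical heart of the Lubotzky--Meshulam theorem, and I would expect most of the genuine proof effort to lie there rather than in the geometric reduction to translation lengths.
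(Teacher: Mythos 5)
The paper does not prove this statement; it is imported verbatim from Lubotzky--Meshulam and Evra--Golubev--Lubotzky, so there is no in-paper proof to compare against. Judged against the actual proofs in those references, your geometric reduction is correct and standard: injectivity radius is controlled by half the minimal vertex displacement $\inf_{\gamma\ne 1,\,x}\dist(x,\gamma x)$ of the lattice $\Gamma$ acting on the building, and for semisimple $\gamma$ this displacement is realized as a translation length expressible (via the Cartan decomposition) in terms of the $y$-adic valuations of the eigenvalues of a lift of $\gamma$. The final conversion from a displacement bound linear in $e$ to the stated bound in $\log_q|X|$ via $|X(0)|=|PGL_d(\F_{p^e})|\approx p^{e(d^2-1)}$ is also fine.

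The gap is in the middle step. You propose to bound the number of possible characteristic polynomials with small coefficient degrees and then run ``a pigeonhole/finiteness argument over conjugacy classes in $\Gamma$.'' That is not how the argument goes, and it is not clear it could be made to work: pigeonhole would require knowing that many distinct elements of $\Gamma$ are forced to share a characteristic polynomial, and then deriving a contradiction from that collision, neither of which you have set up. The actual mechanism is a direct congruence-plus-anisotropy argument. The LSV lattice $\Gamma$ is a principal congruence subgroup $\Gamma_0(I)$ of an arithmetic lattice coming from a \emph{division} algebra over $\F_q(y)$, for an ideal $I$ with $\deg I = \Theta(e)$. For $\gamma\in\Gamma\setminus\{1\}$ one has $\gamma\equiv 1\pmod I$, so the coefficients of $\mathrm{char}(\gamma)$ are congruent modulo $I$ to those of $(t-1)^d$. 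A small translation length bounds the $y$-degrees of these coefficients by $O((d-1)\ell)$; once $\deg I$ exceeds this, congruence forces equality, i.e.\ $\mathrm{char}(\gamma)=(t-1)^d$, so $\gamma$ is unipotent. But a lattice arising from a division algebra is anisotropic and contains no nontrivial unipotents, giving $\gamma=1$, a contradiction, and hence $\ell=\Omega(e/(d-1))$. This is the technical heart you correctly flagged as the hard part, but the tool is a ``no short nontrivial elements'' argument from the congruence subgroup property and anisotropy, not a counting argument; without invoking those two structural facts about $\Gamma$ your sketch does not close.
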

\item {\bf Global structure:} Finally, we look at the second cohomology group of the LSV complexes. Kaufman, Kazhdan and Lubotzky~\cite{KaufmanKL2016} showed that the groups defining the LSV quotient complexes can be chosen so that the second homology is non-empty. 

\begin{proposition}[Kaufman, Kazhdan, Lubotzky~{\cite[Proposition 3.6]{KaufmanKL2016}}]\label{prop:KKL}
There is an infinite and explicit sequence of LSV complexes with a non-vanishing second cohomology.\end{proposition}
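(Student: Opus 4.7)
My plan is to identify the cohomology of the LSV complex with the group cohomology of the defining arithmetic lattice, and then exploit the flexibility of the LSV construction to realize a non-vanishing second cohomology. Concretely, let $\Gamma$ be the arithmetic lattice in $PGL_d(F)$, $F = \F_q(\!(y)\!)$, for which the LSV complex is the quotient $X = \Gamma \backslash \B^{(d)}(F)$. After passing to a torsion-free subgroup of finite index if necessary, $\Gamma$ acts freely on the contractible building $\B^{(d)}(F)$, so $X$ is an Eilenberg--MacLane space $K(\Gamma,1)$ and $H^2(X, \F_2) \cong H^2(\Gamma, \F_2)$. The problem thus reduces to producing an infinite family of such $\Gamma$'s with non-vanishing mod-$2$ second group cohomology.

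First, I would exhibit a single $\Gamma$ with $H^2(\Gamma, \F_2) \neq 0$. The LSV lattices come from unit groups of central simple (in fact, division) algebras $D$ of degree $d$ over $F$; the local invariants of $D$ are adjustable. By choosing $D$ whose Brauer class has even order and whose ramification behavior forces a non-trivial $\F_2$-valued central extension $1 \to \F_2 \to \tilde\Gamma \to \Gamma \to 1$, the corresponding $2$-cocycle represents a non-zero class in $H^2(\Gamma, \F_2)$.

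To upgrade this single example to an infinite family, I would form a tower of principal congruence subgroups $\{\Gamma_n\}_{n\ge 1}$ inside $\Gamma$ of odd index, and set $X_n = \Gamma_n \backslash \B^{(d)}(F)$. Each $X_n$ remains an LSV complex (with a refined set of generators), and the covering map $X_n \to X$ has odd degree. The transfer--restriction composition $H^2(X, \F_2) \to H^2(X_n, \F_2) \to H^2(X, \F_2)$ equals multiplication by $[\Gamma : \Gamma_n]$, which is odd and hence invertible in $\F_2$. Consequently the restriction map is injective and $H^2(X_n, \F_2) \neq 0$ for every $n$, yielding the required infinite explicit sequence.

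The main obstacle---indeed, the substantive content of the proposition---lies in the first step: pinpointing an explicit division algebra (equivalently, an explicit lattice $\Gamma$) for which the mod-$2$ second group cohomology is provably non-zero. This is where genuine number-theoretic input is unavoidable; one must invoke the Galois-cohomological description of $H^2$ of arithmetic groups, and carefully arrange the ramification data of $D$ so that the resulting Brauer obstruction delivers a non-trivial class modulo $2$ rather than collapsing to zero.
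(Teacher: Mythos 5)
Your approach is genuinely different from the one the paper relies on, but it has a gap exactly where you acknowledge one: the entire content of the proposition is concentrated in producing a \emph{single} lattice $\Gamma$ with $H^2(\Gamma,\F_2)\neq 0$, and your proposal leaves that step as an unspecified appeal to ``ramification data'' and ``Brauer obstruction.'' That is not a proof. The claim that the degree-$d$ division algebra $D$ over $F=\F_q(\!(y)\!)$ canonically furnishes a central extension $1\to\F_2\to\tilde\Gamma\to\Gamma\to 1$ whose class is nonzero mod $2$ is not something that follows from the Brauer class having even order; the natural cocycle coming from $D$ lives in $H^2(\Gamma, F^\times)$ (or a reduced-norm quotient of it), and extracting a nonzero $\F_2$-valued class requires an argument you have not given. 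Moreover, if one wants to keep the quotient an \emph{LSV} complex (with the explicit generator set and the Cayley-complex description that the paper's constructibility claim depends on), one cannot freely ``pass to a torsion-free subgroup of finite index'' or to arbitrary congruence subgroups.

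The argument Kaufman--Kazhdan--Lubotzky actually use, which the paper recalls, is more concrete and bypasses the number theory entirely. Start with any LSV complex $X$, realized as a Cayley complex of the finite group $G$. Pick an involution $g\in G$ (one always exists) acting freely on $X$, and take the quotient $Y=X/\langle g\rangle$, still a quotient of $\B^{(d)}$ and hence a Ramanujan complex, and still explicitly constructible. The double cover $X\to Y$ yields a surjection $\pi_1(Y)\twoheadrightarrow\Z/2$, hence $H^1(Y,\F_2)\neq 0$. Now property (T) of the ambient lattice enters: it forces $H^1$ with real coefficients to vanish, so $H_1(Y,\Z)$ is finite torsion. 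A nonzero class in $H^1(Y,\F_2)=\mathrm{Hom}(H_1(Y,\Z),\F_2)$ then means $H_1(Y,\Z)$ has $2$-torsion, and universal coefficients gives $H^2(Y,\F_2)\supseteq\mathrm{Ext}^1(H_1(Y,\Z),\F_2)\neq 0$. Running this over the infinite LSV family (varying $e$) gives the infinite explicit sequence without any transfer argument.

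Your transfer step, while valid as stated (odd-index covers do give injectivity of restriction on $\F_2$-cohomology), is solving a problem the KKL construction does not have: once one member of the family is produced via the involution-quotient construction, the \emph{same construction} applied to each member of the LSV family yields the rest, so no tower of congruence subgroups is needed. If you wanted to salvage your route you would need to (i) actually compute $H^2(\Gamma,\F_2)$ for one explicit $\Gamma$, which is the hard part, and (ii) verify that odd-index congruence covers remain within the explicitly constructible LSV class, which is not obvious.
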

We remark that Kaufman, Kazhdan and Lubotzky~\cite{KaufmanKL2016} proved that these complexes exist. To show that they are also efficiently constructible, we look into their proof to recall the construction: start with any LSV complex $X$ viewed as a Cayley graph of a group $G$. Find some element of order $2$ in $G$ (such an element always exists), and then quotient $X$ by this element, thus obtaining a complex $Y$ that is itself is a Ramanujan complex because it is a quotient of one. $Y$ is clearly efficiently constructible from $X$, and has half as any vertices. This construction shows (see \cite[Proposition 3.5]{KaufmanKL2016}) that $H^1(Y)\neq 0$. Furthermore, the proof of \cite[Proposition 3.6]{KaufmanKL2016} shows that because $G$ has property $T$ one can deduce also that $H^2(Y)\neq 0$.

Evra and Kaufman proved~\cite[Theorem 1.9]{EvraK2016} that quotients of $\B^{(d)}$ (and even a more general class of complexes) are so-called ``cosystolic expanders'' which in particular implies the following.
\begin{theorem}[Evra and Kaufman~{\cite[Part of Theorem 1.9]{EvraK2016}}]
Let $\set{X_n}$ be a family of LSV complexes. There exists some constant $\mu>0$ that depends only on $q$ and $d$ but not on the size $n$ of the complex, such that every $f\in Z^2(X)\setminus B^2(X)$ must have weight at least $\mu\cdot |X(2)|$. \end{theorem}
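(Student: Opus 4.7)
The plan is to prove the contrapositive: every $f \in Z^2(X)$ with $|f| < \mu|X(2)|$ already lies in $B^2(X)$. I would establish this via a \emph{local-to-global} argument, whose local input is coboundary expansion of vertex links and whose global ingredient is the spectral expansion of $X$.

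First I would establish that each vertex link $X_v$ is a $1$-coboundary expander. Because $X$ is covered by the building $\B^{(d+1)}$, the link $X_v$ is isomorphic to a fixed spherical building over $\F_q$. By the Solomon--Tits theorem, a spherical building is homotopy equivalent to a wedge of spheres in its top dimension, so $H^1(X_v)=0$. Combining this vanishing with the spectral gap of the $1$-skeleton of $X_v$ through Garland's method (alternatively, via Kazhdan property $T$ of the ambient group, as in Kaufman--Kazhdan--Lubotzky) should yield a constant $\eta=\eta(q,d)>0$ such that every $h\in Z^1(X_v)$ can be written as $\delta p$ with $|p|/|X_v(0)| \le \eta\cdot |h|/|X_v(1)|$.

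Next, given $f\in Z^2(X)$, the restriction $f_v\in C^1(X_v)$ defined by $f_v(\set{u,w}):=f(\set{u,v,w})$ is automatically a $1$-cocycle (a direct unpacking of $\delta f = 0$), and by the previous step each $f_v$ admits a local potential $p_v$ of small weight. To pass to a global potential I would first replace $f$ by a weight-minimal representative $f'=f+\delta g_0$ of its cohomology class. The key leverage is \emph{local minimality}: no single-vertex correction can reduce $|f'|$, so for every vertex $v$ the local cocycle $f'_v$ is itself locally minimal in $X_v$, which together with coboundary expansion of the link forces $f'_v$ either to vanish or to carry at least a $\mu'$-fraction of the edges of $X_v$. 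Summing $|f'_v|$ over $v\in X(0)$ triple-counts each triangle (so $\sum_v |f'_v|=3|f'|$), and using the spectral expansion of $X$ to ensure that a positive fraction of vertices actually have $f'_v\neq 0$, one concludes $|f'|\ge\mu|X(2)|$ for a suitable $\mu=\mu(q,d)>0$, unless $f'=0$, in which case $f\in B^2(X)$.

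The main obstacle is the transition from local to global, specifically reconciling local potentials $p_v$ that are only defined up to the $0$-cocycle ambiguity and need not be globally consistent. The spectral expansion of the $1$-skeleton (inherited from the Ramanujan property of $X$) is what prevents the ``inconsistent'' vertices from clustering, and quantifying this pseudorandomness carefully --- so that the bookkeeping from all the local estimates composes into a single constant $\mu$ depending only on $q$ and $d$ and not on $|X|$ --- is the bulk of the technical labor, and is precisely the content of the Evra--Kaufman argument.
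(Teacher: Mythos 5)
The paper does not prove this theorem: it is imported verbatim from Evra and Kaufman~\cite{EvraK2016}, so there is no in-paper argument to compare against. I will therefore evaluate your sketch on its own terms.

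There is a concrete error at the pivot of the argument. You assert that for a $2$-cocycle $f$ and a vertex $v$, the link restriction $f_v\in C^1(X_v)$ defined by $f_v(\set{u,w}):=f(\set{u,v,w})$ is ``automatically a $1$-cocycle.'' It is not. For a triangle $\set{u,w,x}\in X_v(2)$ (so $\set{u,v,w,x}\in X(3)$) one computes
\[
(\delta_1^{X_v} f_v)(\set{u,w,x})
= f(\set{u,v,w})+f(\set{u,v,x})+f(\set{v,w,x}),
\]
and using $\delta_2 f(\set{u,v,w,x})=0$ this equals $f(\set{u,w,x})$, which has no reason to vanish. In other words, $\delta f_v$ reproduces the part of $f$ supported on the $2$-faces of the link itself, so the restriction of a cocycle to a vertex link is generally \emph{not} a cocycle of the link. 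This removes the foundation of your second and third paragraphs: you cannot trivialize $f_v$ by link-level $1$-coboundary expansion, and the claimed dichotomy that each locally minimal $f'_v$ either vanishes or carries a constant fraction of $X_v(1)$ does not follow from what you have set up.

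Some of the scaffolding is in the right spirit --- arguing the contrapositive, passing to a weight-minimal representative, the counting identity $\sum_v |f_v|=3|f|$, and invoking coboundary expansion of links together with spectral expansion of the skeleton --- and these ingredients do appear in the Kaufman--Kazhdan--Lubotzky/Evra--Kaufman machinery. But the actual argument is substantially more delicate than your sketch: it works with a carefully calibrated notion of locally minimal cochains, uses an ``expansion of small locally-minimal cochains'' lemma rather than treating link restrictions as cocycles, and needs coboundary expansion of links of higher-dimensional faces as well (hence a lower bound on the dimension $d$ that your sketch does not mention). The final paragraph, which acknowledges that reconciling inconsistent local potentials is ``precisely the content'' of Evra--Kaufman, is an honest admission that the hard part has not been done; combined with the incorrect cocycle claim, the sketch does not constitute a proof or a correct proof outline.
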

\end{itemize}
 \section{Main result} \label{sec:main}

\subsection{Local geometry of LSV complexes}\label{sec:geometry}
%
%
The infinite sequence of complexes we will be working with are the LSV complexes described in Section \ref{sec:LSV} above. The properties we care about are (1) that they are efficiently constructible, (2) that small balls in these complexes are isomorphic to the affine building, which satisfies certain isoperimetric inequalities because it is a CAT(0) space, and (3) that each complex has a two-dimensional cocycle with linear distance from the set of coboundaries. The second and third properties provide the tension between the local and the global structure of these complexes that we now harness for our hardness. 
\medskip

To construct an SDP solution, we will need to show that our instance based on the LSV complex ``locally looks satisfiable''. To this end, we will first develop some local properties of the LSV complex.

Note that each $h \in C^2$ corresponds to a set of triangles. For the following statements, we consider two triangles to be connected if they share an edge. This can be used to define connected components. Note that if $h$ can be split into connected components $h_1, \ldots, h_s$, then the components correspond to \emph{disjoint} sets of triangles. Moreover, no triangle in $h_i$ shares an edge with a triangle in $h_j$ when $i \neq j$, which also implies that the boundaries $\dho h_i$ and $\dho h_j$ correspond to disjoint sets of edges.

We prove the following claims by mapping small connected sets in $X(2)$ to corresponding sets in the infinite building $\B$.
%
%
%
%
The first proposition shows that there can be no small non-trivial cancellations (i.e., not coming from tetrahedra).
\begin{proposition}\label{prop:boundary}
Let $h_0 \in C^2$ be a connected set of triangles such that $|h_0| < r$ and $\dho h_0 = 0$. Then $h_0 \in B_2$.
\end{proposition}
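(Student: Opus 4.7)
The plan is to exploit the fact that a small connected set of triangles must fit inside a ball that is isomorphic to a ball in the building, and then use the vanishing homology of balls in the building (\cref{lem:balls-building}) to express $h_0$ as a boundary.

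First I would argue that $h_0$ sits inside a small ball in $X$. Since $h_0$ is a connected set of triangles (where connectivity is through shared edges), any two triangles in $h_0$ are joined by a chain of at most $|h_0|-1$ triangles, each consecutive pair sharing an edge. This bounds the skeletal diameter of the vertex set of $h_0$ by $O(|h_0|) < O(r)$, so $h_0$ is contained in the ball $\ball_R(v) \subset X$ of some radius $R = O(r)$ around any fixed vertex $v$ of $h_0$. (The constant hidden in $O(\cdot)$ is the reason the hypothesis is stated as $|h_0| < r$ rather than a sharp bound; adjusting the injectivity radius by a constant factor absorbs it.)

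Next, by the definition of injectivity radius, the covering map $\psi \colon \B^{(d)} \to X$ restricts to an isomorphism from some ball $\tilde{\ball} \subset \B^{(d)}$ onto $\ball_R(v)$. Let $\tilde h_0 \in C_2(\tilde{\ball})$ denote the lift of $h_0$ under this isomorphism. Because $\psi$ is a simplicial isomorphism on $\tilde{\ball}$, it commutes with the boundary operator, so $\partial \tilde h_0$ is the lift of $\partial h_0 = 0$, hence $\partial \tilde h_0 = 0$ in $\tilde{\ball}$.

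Now I invoke \cref{lem:balls-building}: balls in $\B^{(d+1)}$ have vanishing homologies, so every $2$-cycle in $\tilde{\ball}$ is a $2$-boundary. Thus there exists $\tilde g \in C_3(\tilde{\ball})$ with $\partial \tilde g = \tilde h_0$. Pushing $\tilde g$ back down via $\psi$ yields a $3$-chain $g \in C_3(\ball_R(v)) \subseteq C_3(X)$ with $\partial g = h_0$, so $h_0 \in B_2(X)$, as desired.

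The one place to be careful is the translation between the triangle count $|h_0|$ and the radius needed for the containing ball; everything else is essentially bookkeeping about the covering map preserving $\partial$. The core of the proof is the clean local-to-global transfer: locally $X$ cannot be distinguished from the contractible building, and contractibility kills the only obstruction to writing a cycle as a boundary.
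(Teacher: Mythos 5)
Your proof is correct and takes essentially the same route as the paper: locate the support of $h_0$ inside a ball of radius controlled by $r$, lift it through the covering map $\psi$ to a $2$-cycle in a ball of $\B$, invoke \cref{lem:balls-building} to fill it, and push the filling back to $X$. You are somewhat more careful than the paper in spelling out why a connected set of fewer than $r$ triangles has skeletal diameter $O(r)$, but this is a cosmetic refinement of the same argument.
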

\begin{proof}
Since $|h_0|<r$, there is a ball $\ballX$ of radius $r$ that contains the support of $h_0$. By assumption, the covering map $\psi\colon \B\to X$ has injectivity radius of at least $r$. This means that there is a radius-$r$ ball $\ballB = \psi^{-1}(\ballX)$ in $\B$ that is isomorphically mapped by $\psi$ to $\ballX$. Look at $\hat h_0 = \psi^{-1}(h_0) \in C^2(\ballB)$, the chain isomorphic to $h_0$ in the building. Clearly $\dho \hat h_0 = \psi^{-1} (\dho h_0) = 0$, and since balls in the building have zero homologies by \Cref{lem:balls-building}, we deduce that $\hat h_0$ itself must be a boundary, i.e.\ there must be some $\hat g_0\in C^3(\ballB)$ such that $\dho \hat g_0 = \hat h_0$. Moving back to $X$, we see that $g_0 := \psi(\hat g_0)\in C^3(X)$ necessarily satisfies $\dho g_0 = h_0$, and so $h_0 \in B_2$.
\end{proof}
This proposition states that locally (i.e., within the injective radius $r$), $Z_2$ looks like $B_2$. We thus have a complex whose cohomology group is non-trivial, yet locally, the \emph{homology} group ``looks'' trivial. Note that this is a twist on what we had claimed in the introduction, a complex whose cohomology group is non-trivial, yet locally, the \emph{cohomology} group ``looks'' trivial. However, these are identical statements owing to \cref{claim:homeqcohom}.

The next proposition shows that Gromov's filling inequality in the infinite building $\B$ can be used to yield a similar consequence for small sets in the finite complex $X$.
\begin{proposition}\label{prop:filling}
Let $h_0 \in C_2$ be a connected set of triangles such that $|h_0| < r$ and $|h_0| \leq |h_0+h|$ for all $h \in B_2$. Then, $|\dho h_0| \geq c \cdot |h_0|^{1/2}$, where $c > 0$ is an absolute constant.
\end{proposition}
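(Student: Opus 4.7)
The plan is to follow the lift-to-the-building template of \cref{prop:boundary}, but this time combining Gromov's filling inequality for $1$-cycles with the vanishing homology of a ball. Since $h_0$ is connected with $|h_0| < r$, its support lies in some radius-$r$ ball $\ballX \subseteq X$, which by the injectivity-radius hypothesis is the image of a radius-$r$ ball $\ballB$ in the building $\B^{(d+1)}$ under an isomorphic restriction of the covering map $\psi$. Lift $h_0$ to $\hat h_0 := \psi^{-1}(h_0) \in C_2(\ballB)$; then $|\dho \hat h_0| = |\dho h_0|$ because $\psi$ is an isomorphism on $\ballB$.

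Next, note that $\dho \hat h_0 \in Z_1(\ballB)$ since $\dho\dho = 0$. By \cref{lem:balls-building}, balls in $\B^{(d+1)}$ satisfy Gromov's filling inequality, so there exists $\hat g \in C_2(\ballB)$ with $\dho \hat g = \dho \hat h_0$ and $|\hat g| \leq C \cdot |\dho h_0|^2$ for an absolute constant $C$. The chain $\hat h_0 + \hat g$ is then a $2$-cycle in $\ballB$, and because the ball has vanishing homology (again \cref{lem:balls-building}), we may write $\hat h_0 + \hat g = \dho \hat k$ for some $\hat k \in C_3(\ballB)$.

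Finally, push everything forward by $\psi$ and set $g := \psi(\hat g)$, $k := \psi(\hat k)$. Since $\psi$ is injective on $\ballB$ and commutes with $\dho$, we get $h_0 + g = \dho k$ in $X$, so $g = h_0 + h$ where $h := \dho k \in B_2$; moreover $|g| = |\hat g|$ because $\psi$ is injective on the support. The minimality assumption now yields $|h_0| \leq |h_0 + h| = |g| \leq C \cdot |\dho h_0|^2$, which rearranges to $|\dho h_0| \geq C^{-1/2} \cdot |h_0|^{1/2}$. The conceptually delicate step is the passage from Gromov's inequality, which only talks about cycles and their fillings, to a statement about the arbitrary chain $h_0$: Gromov is applied to the $1$-cycle $\dho h_0$, and then vanishing homology of the ball is used to replace the resulting filling by a chain that differs from $h_0$ by a $2$-boundary, at which point minimality does the rest.
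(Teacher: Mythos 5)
Your proof is correct and takes essentially the same route as the paper: lift $h_0$ to a ball $\ballB$ in the building, apply Gromov's filling inequality to the $1$-cycle $\dho \hat h_0$, use vanishing homology of the ball to show the Gromov filling differs from $\hat h_0$ by a $2$-boundary, push forward to $X$, and invoke the minimality assumption on $h_0$. Aside from naming (your $\hat g, \hat k$ are the paper's $\hat h_1, \hat g$), this matches the paper's argument step for step.
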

\begin{proof}
  As before, the support of $h_0$ is contained in a ball $\ballX$ of $X$ which is isomorphic under $\psi$ to a ball $\ballB$ in $\B$. Let $\hat h_0 = \psi^{-1}(h_0)\in C^2(\B)$, and let $\hat f_0 = \dho \hat h_0$. We now apply the filling theorem of Gromov, which holds in $\ballB$ due to \Cref{lem:balls-building}, to deduce that there is some $\hat h_1$ that fills $\hat f_0$, namely $\dho \hat h_1 = \hat f_0$, and whose size is at most $|\hat h_1| = O(|\hat f_0|^2)$.

Now $\dho(\hat h_0 - \hat h_1) = \hat f_0 - \hat f_0 = 0$. Since the ball $\ballB$ has zero homologies by \Cref{lem:balls-building}, $\hat h_0-\hat h_1$ itself must be a boundary: there must be some $\hat g\in C^3(\ballB)$ such that $\dho \hat g = \hat h_0-\hat h_1$. Pushing $\hat g$ and $\hat h_1$ back to $X$, we get $g = \psi(\hat g)$ and $h_1 = \psi(\hat h_1)$, which satisfy $\dho g = h_0 - h_1$. At this point we have a \emph{small} $h_1$ 
that is close via a boundary to $h_0$. Finally, observe that $f_0=\dho h_0$ satisfies $f_0 = \psi^{-1}(\hat f_0)$. So
\[
 |f_0| ~=~ |\hat f_0| ~\ge~ c \cdot |\hat h_1|^{1/2} ~=~  c \cdot |h_1|^{1/2} ~\ge~ c \cdot |h_0|^{1/2}, \]
where the last inequality used that $|h_0| \leq |h_0 + (h_1-h_0)|$, since $h_1-h_0 = \partial g \in B_2$.
\end{proof}

\subsection{Fooling $\Omega(\sqrt {\log n})$ levels of SoS hierarchy}
Let $X$ be a $d$-dimensional LSV complex, with $\abs{X(1)} = n$ and non-trivial second cohomology group, as per Proposition \ref{prop:KKL}. Below, we construct an instance of 3XOR in $n$ variables using this complex, and prove a lower bound on the integrality gap of the relaxation obtained by $\Omega(\sqrt{\log n})$ levels of the SoS hierarchy.
\paragraph{Construction.} We construct a system of equations on $X$ by putting a variable $x_{\{a,b\}}$ for each edge $\{a,b\}\in X(1)$ of the complex, and an equation \[x_{\{a,b\}}+x_{\{b,c\}}+x_{\{c,a\}}=\beta_{\{a,b,c\}}\] for each triangle $\{a,b,c\}\in X(2)$, where $\beta$ is an arbitrary element of $Z^2\setminus B^2$.

Recall that $X$ can be constructed efficiently. Given $X$, we can find a vector $\beta \in Z^2 \setminus B^2$ using elementary linear algebra. Therefore the entire system can be constructed efficiently.

\paragraph{Soundness.} Soundness of this system follows easily from the fact that the cosystole is large.
\begin{claim}[Soundness]\label{claim:soundness}
Every assignment to the system defined above falsifies at least $\mu$ fraction of the equations.
\end{claim}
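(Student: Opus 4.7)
The plan is to identify an assignment with a $1$-cochain and reinterpret the number of violated equations as the weight of a cocycle lying outside the coboundaries; then the cosystolic expansion theorem of Evra and Kaufman finishes the argument immediately.

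More concretely, I would first set notation: given any assignment to the system, view it as an element $x \in C^1$ by setting $x(\{a,b\})$ to the assigned value of the variable $x_{\{a,b\}}$. By the definition of $\delta_1$,
\[
 (\delta_1 x)(\{a,b,c\}) = x(\{a,b\}) + x(\{b,c\}) + x(\{c,a\}),
\]
so the triangle $\{a,b,c\}$ gives a violated equation iff $(\delta_1 x)(\{a,b,c\}) \neq \beta_{\{a,b,c\}}$. Hence the number of violated equations is exactly $|\delta_1 x + \beta|$, and the fraction of violated equations is $|\delta_1 x + \beta|/|X(2)|$.

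Next I would argue that $\delta_1 x + \beta \in Z^2 \setminus B^2$. It lies in $Z^2$ because $\delta_1 x \in B^2 \subseteq Z^2$ and $\beta \in Z^2$ by construction, so their sum is in $Z^2$. It does not lie in $B^2$ because otherwise we would have $\beta = (\delta_1 x + \beta) + \delta_1 x \in B^2$, contradicting our choice $\beta \in Z^2 \setminus B^2$.

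Finally, I would invoke the cosystolic expansion theorem of Evra and Kaufman quoted at the end of \Cref{sec:LSV}: since $\delta_1 x + \beta \in Z^2 \setminus B^2$, its weight satisfies $|\delta_1 x + \beta| \geq \mu \cdot |X(2)|$, which gives the claimed bound $\mu$ on the fraction of falsified equations. There is no real obstacle here; the content of the argument is entirely in the cosystolic expansion theorem, and the proof of the claim is essentially bookkeeping that translates ``violated equations'' into the quantity $\Syst^2(X)$ controls.
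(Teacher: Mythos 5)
Your proof is correct and follows essentially the same route as the paper: identify the assignment with a $1$-cochain $f\in C^1$, observe that the unsatisfied constraints correspond to the support of $\delta f + \beta \in Z^2\setminus B^2$, and invoke the Evra--Kaufman cosystolic expansion bound $\Syst^2(X)\geq \mu$. The only differences are cosmetic (notation and the extra sentence verifying $\delta f + \beta\notin B^2$, which the paper leaves implicit).
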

\begin{proof}
An assignment to the variables is equivalent to an $f\in C^1$. Every equation satisfied by $f$ is a triangle in which $\delta f(\{a,b,c\}) = \beta_{\{a,b,c\}}$, and so the number of unsatisfied equations is $\dist(\delta f, \beta) = |\delta f+\beta|$. Since $\delta f \in B^2$ and $\beta \in Z^2 \setminus B^2$, also $\delta f + \beta \in Z^2 \setminus B^2$, and so $|\delta f + \beta|/|X(2)| \geq \Syst^2(X) \geq \mu$. In other words, the assignment falsifies at least a $\mu$ fraction of the equations.
\end{proof}

The main work is to prove completeness, namely to show that the system looks locally satisfiable.

\paragraph{Completeness.} Our main result is that this system appears satisfiable to the Sum-of-Squares hierarchy with $O(\sqrt{\log n})$ levels. Grigoriev~\cite{Grigoriev2001} and Schoenebeck~\cite{Schoenebeck2008} showed that to prove such a statement it suffices to analyze the \emph{refutation width} of the system of equations (see \Cref{lem:sos}). If the refutation width is at least $w$, then $w/2$ levels of the Sum-of-Squares hierarchy cannot refute the system.

A system of linear equations over $\mathbb{F}_2$ can be \emph{refuted} using a proof system known as \emph{$\oplus$-resolution}, in which the only inference rule is: given $\ell_1=b_1$ and $\ell_2=b_2$, deduce $\ell_1 \oplus \ell_2 = b_1 \oplus b_2$; here $\ell_1,\ell_2$ are XORs of variables, and $b_1,b_2$ are constants. A refutation has the structure of a directed acyclic graph (DAG) where each non-leaf node has two incoming edges.
A \emph{refutation} is a derivation which starts with the given linear equations, placed at the leaves of a DAG, and reaches the equation $0=1$ at the root of the DAG. The \emph{width} of a linear equation $\ell = b$ is the number of variables appearing in $\ell$. The width of a refutation is the maximum width of an equation in any of the nodes of the DAG. 
\smallskip

In the remainder of this section, we prove the following theorem, which together with \cref{lem:sos} implies \cref{thm:main-intro}.
\begin{theorem} \label{thm:width-lb}
The construction above requires width at least $\Omega(\sqrt{r})$ to refute in $\oplus$-resolution, where $r = \Theta(\log n)$ is the injectivity radius of the complex.
\end{theorem}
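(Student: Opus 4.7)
My plan is a Ben--Sasson--Wigderson style size-width tradeoff, executed on the chain complex. First I identify each equation appearing in a refutation with a $2$-chain $h\in C^2$: an axiom for triangle $t$ is carried by $\mathbf{1}_t$, the $\oplus$ rule corresponds to chain addition $h = h_1 + h_2$, and the equation carried by $h$ is $\sum_e (\partial h)(e)\,x_e = \langle h, \beta\rangle$, whose width equals $|\partial h|$. Because $\beta \in Z^2$, replacing $h$ by $h + g$ with $g \in B_2$ preserves both sides, so the genuine object carried by each proof line is the $B_2$-coset of $h$.

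The complexity measure I would use is $\mu(h) = \min_{g\in B_2}|h+g|$. Subadditivity $\mu(h_1+h_2)\le \mu(h_1)+\mu(h_2)$ is immediate from the triangle inequality applied to optimal corrections. Each axiom has $\mu \le 1$. The contradictory line $0=1$ is carried by some $h \in Z_2$ with $\langle h,\beta\rangle = 1$; every representative in its $B_2$-coset still lies in $Z_2 \setminus B_2$, so at least one of its connected components $h'_i$ is itself in $Z_2 \setminus B_2$ (otherwise $h'$ would lie in $B_2$), and \cref{prop:boundary} forces $|h'_i|\ge r$. Hence $\mu(\text{root}) \ge r$.

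Walking from the root toward the leaves, always following the premise with the larger $\mu$, subadditivity guarantees that $\mu$ drops by at most a factor of $2$ per step, from $\ge r$ at the root to $\le 1$ at a leaf. Let $v$ be the first node on the walk with $\mu(v) < r$; its predecessor had $\mu \ge r$, so $\mu(v) \in [r/2, r)$. Choose a minimum-weight representative $h'$ in the $B_2$-coset of $h_v$, so $|h'| = \mu(v) < r$ and $\partial h' = \partial h_v$, and decompose $h' = h'_1 + \cdots + h'_s$ into its connected components, whose triangle supports are pairwise disjoint.

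The heart of the argument is that global minimality of $h'$ automatically forces each component to be minimal in its own $B_2$-coset. Combining optimal corrections per component gives $\mu(h')\le \sum_i \mu(h'_i)$; together with $\mu(h'_i)\le |h'_i|$ and the disjointness identity $\sum_i |h'_i| = |h'| = \mu(h')$, the inequalities collapse to $\mu(h'_i) = |h'_i|$ for every $i$. Each $h'_i$ is connected with $|h'_i| < r$, so \cref{prop:filling} yields $|\partial h'_i| \ge c\,|h'_i|^{1/2}$. The components' boundaries have disjoint support, so
\[
|\partial h_v| \;=\; |\partial h'| \;=\; \sum_i |\partial h'_i| \;\ge\; c\sum_i |h'_i|^{1/2} \;\ge\; c\,\sqrt{\textstyle\sum_i |h'_i|} \;=\; c\sqrt{\mu(v)} \;\ge\; c\sqrt{r/2},
\]
using $\bigl(\sum_i \sqrt{x_i}\bigr)^2 \ge \sum_i x_i$ for the middle step. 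This gives the desired width bound of $\Omega(\sqrt r)$. The delicate point I anticipate is precisely the above cascade to per-component minimality: without it one would have to argue that a local reduction of one component cannot interact with another component sharing a common ball, which is not transparent from the definition of $B_2$ alone, and the telescoping identity $\sum_i \mu(h'_i) = \mu(h')$ is what lets \cref{prop:filling} be applied component-wise.
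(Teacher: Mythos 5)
Your proposal is correct and follows essentially the same route as the paper's proof: the same complexity measure $\mu(h) = \dist(h, B_2)$, the same subadditivity along the DAG, the same "first node below $r$" walk to find a node with $\mu \in [r/2, r)$, and the same decomposition into connected components followed by an application of \cref{prop:boundary} (to show $\mu(\mathrm{root}) \ge r$) and \cref{prop:filling} (to lower-bound the width). The only cosmetic difference is in establishing per-component minimality: the paper argues directly from the chain $|h_i| + |h_\nu+h-h_i| = |h_\nu+h| \le |h_\nu+h+h'| \le |h_i+h'| + |h_\nu+h-h_i|$, while you sandwich $\mu(h') \le \sum_i \mu(h'_i) \le \sum_i |h'_i| = |h'| = \mu(h')$ and extract equality; both are valid and of comparable length, so this is the same proof.
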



The proof follows classical arguments of Ben-Sasson and Wigderson~\cite{BenSassonW2001} regarding lower bounds on resolution width, which were also used in the proof of Schoene\-beck~\cite{Schoenebeck2008}. Whereas Ben-Sasson and Wigderson relied on boundary expansion, we rely on Gromov's filling inequality (and so lose a square root).

Suppose we are given a refutation for this system, and consider the corresponding DAG. Each leaf $\nu$ in the DAG is labeled by a triangle $T_\nu\in X(2)$. Define
\[h_\nu := \one_{T_\nu} \in C^2, \quad \and \quad b_\nu := \beta_{T_\nu}\in \F_2.\]
For each inner node $\nu$ in the DAG, let $\nu_1,\nu_2$ be its two incoming nodes. Define inductively,
\[ h_\nu := h_{\nu_1}+h_{\nu_2} \in C^2, \quad \and \quad b_\nu := b_{\nu_1}+b_{\nu_2}\in \F_2  .\]
\begin{proposition}
For every node $\nu$, $b_\nu = \ip {\beta,h_\nu}$.
\end{proposition}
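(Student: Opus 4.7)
The plan is to prove this by induction on the structure of the DAG, working from the leaves upward. The claim is a purely formal statement about how the $\oplus$-resolution inference rule interacts with the inner product $\langle \beta, \cdot \rangle$, so no topological or expansion property is needed here; the content is just bilinearity of $\langle \cdot, \cdot\rangle$ over $\F_2$ together with the definitions.

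For the base case, let $\nu$ be a leaf labeled by the triangle $T_\nu \in X(2)$, so that $h_\nu = \one_{T_\nu}$ and $b_\nu = \beta_{T_\nu}$ by definition. Unfolding the inner product,
\[
\ip{\beta, h_\nu} \;=\; \sum_{s \in X(2)} \beta(s)\cdot \one_{T_\nu}(s) \;=\; \beta(T_\nu) \;=\; \beta_{T_\nu} \;=\; b_\nu,
\]
which establishes the claim at every leaf.

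For the inductive step, let $\nu$ be an internal node with incoming nodes $\nu_1$ and $\nu_2$, and assume $b_{\nu_i} = \ip{\beta, h_{\nu_i}}$ for $i=1,2$. By the definitions $h_\nu = h_{\nu_1} + h_{\nu_2}$ and $b_\nu = b_{\nu_1} + b_{\nu_2}$, and by linearity of $\ip{\beta,\cdot}$ in its second argument,
\[
\ip{\beta, h_\nu} \;=\; \ip{\beta, h_{\nu_1} + h_{\nu_2}} \;=\; \ip{\beta, h_{\nu_1}} + \ip{\beta, h_{\nu_2}} \;=\; b_{\nu_1} + b_{\nu_2} \;=\; b_\nu.
\]
This completes the induction.

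There is no real obstacle here; the only thing to be careful about is that the $\oplus$-resolution rule, the definition of $h_\nu$ on internal nodes, and the definition of $b_\nu$ on internal nodes all match up, which is exactly what makes the bookkeeping work. The proposition will be used in the next step of the argument to translate the statement ``the derivation reaches $0=1$ at the root'' into the statement $\ip{\beta, h_{\text{root}}} = 1$, which via $\beta \notin B^2$ will force $h_{\text{root}}$ to lie outside $Z_2$, ultimately allowing Gromov's filling inequality and \Cref{prop:boundary} to be brought to bear.
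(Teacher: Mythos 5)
Your induction is correct and is exactly what the paper's one-line remark (``immediate by following inductively the structure of the DAG'') intends: the base case unwinds the inner product against an indicator, and the inductive step is bilinearity over $\F_2$ together with the matching definitions $h_\nu = h_{\nu_1}+h_{\nu_2}$ and $b_\nu = b_{\nu_1}+b_{\nu_2}$. One caveat on your forward-looking remark at the end: it is not the case that $\beta \notin B^2$ is used to place $h_{\mathrm{root}}$ outside $Z_2$ --- in fact $h_{\mathrm{root}} \in Z_2$ automatically since $\dho h_{\mathrm{root}} = f_{\mathrm{root}} = 0$ at the root. The fact actually invoked (in \cref{prop:nonzero}) is $\beta \in Z^2 = (B_2)^\perp$, which together with $\ip{\beta,h_{\mathrm{root}}} = b_{\mathrm{root}} = 1$ forces $h_{\mathrm{root}} \notin B_2$, i.e., $\kappa(\nu^*) > 0$; the assumption $\beta \notin B^2$ is what goes into soundness, not this step.
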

\begin{proof}
This is immediate by following inductively the structure of the DAG.
\end{proof}
As in~\cite{BenSassonW2001}, we next define a complexity measure for each node of the DAG.
While in~\cite{BenSassonW2001} the complexity measure is based on the number of ``leaf equations'' used to derive the one at a given node, we will need to discount sets of triangles corresponding to tetrahedra, as these cannot lead to contradictions.
Recall that $B_2 = \im \dho_3$ is the set of triangle chains that ``come from'' tetrahedra chains, which we consider as the ``trivial'' cycles. We define a complexity measure at each node,
\[ \kappa(\nu) := \dist(h_\nu,B_2) = \min_{h\in B_2} | h_\nu + h|\]
that measures the distance of $h_\nu$ from these trivial cycles.
The complexity measure $\kappa$ satisfies the following sub-additivity property.
\begin{proposition}\label{prop:subadditive}
If $\nu$ is an inner node in the DAG with $\nu_1, \nu_2$ its two incoming nodes, then
\[ \kappa(\nu) \leq \kappa(\nu_1) + \kappa(\nu_2) . \]
\end{proposition}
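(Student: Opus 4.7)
The proof of \cref{prop:subadditive} is essentially a one-line calculation exploiting two facts: that $B_2$ is a linear subspace of $C^2$ (hence closed under addition), and that the Hamming weight $|\cdot|$ satisfies the triangle inequality $|a+b|\le|a|+|b|$ (since the support of $a+b$ is contained in the union of the supports of $a$ and $b$). My plan is therefore as follows.

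First I would pick witnesses for the two incoming nodes. Since $\kappa$ is defined as a minimum over $B_2$, there exist $h^{(1)}, h^{(2)} \in B_2$ achieving the minima, i.e., $|h_{\nu_1} + h^{(1)}| = \kappa(\nu_1)$ and $|h_{\nu_2} + h^{(2)}| = \kappa(\nu_2)$. Next, I would form the candidate $h^{(1)} + h^{(2)}$ for the node $\nu$. Because $B_2 = \im \dho_3$ is the image of a linear map and hence a linear subspace, $h^{(1)} + h^{(2)} \in B_2$ as well, so this is a feasible candidate in the minimization defining $\kappa(\nu)$.

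Then I would just plug in and use the recursive definition $h_\nu = h_{\nu_1} + h_{\nu_2}$ together with the triangle inequality for $|\cdot|$:
\[
  \kappa(\nu) \;\le\; \bigl|h_\nu + (h^{(1)} + h^{(2)})\bigr| \;=\; \bigl|(h_{\nu_1}+h^{(1)}) + (h_{\nu_2}+h^{(2)})\bigr| \;\le\; \bigl|h_{\nu_1}+h^{(1)}\bigr| + \bigl|h_{\nu_2}+h^{(2)}\bigr| \;=\; \kappa(\nu_1) + \kappa(\nu_2).
\]

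There is no real obstacle here; the only things to verify are the linearity of $B_2$ (immediate from the linearity of $\dho_3$) and the triangle inequality for Hamming weight over $\F_2$ (immediate from the fact that the support of $a+b$ lies inside the union of the supports of $a$ and $b$). The content of the definition of $\kappa$ — minimizing over the affine coset $h_\nu + B_2$ rather than just taking $|h_\nu|$ — is precisely what makes this sub-additivity work, since it allows the ``correction'' chains $h^{(1)}, h^{(2)}$ to combine into a single valid correction for the sum.
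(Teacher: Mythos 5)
Your proof is correct and is essentially the same as the paper's: both pick minimizing witnesses $h^{(1)},h^{(2)}\in B_2$ for the two children, use linearity of $B_2$ to treat $h^{(1)}+h^{(2)}$ as a feasible corrector for $h_\nu=h_{\nu_1}+h_{\nu_2}$, and then apply the triangle inequality for Hamming weight. The only difference is cosmetic (the paper writes the chain of inequalities from right to left).
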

\begin{proof}
  Let $h_1, h_2 \in B_2$ be such that $\kappa(\nu_1) = |h_{\nu_1}+h_1|$ and $\kappa(\nu_2) = |h_{\nu_2}+h_2|$. Recall that $h_\nu = h_{\nu_1}+h_{\nu_2}$. Then, we have
 \begin{align*}
    \kappa(\nu_1) + \kappa(\nu_2) ~=~ |h_{\nu_1}+h_1| + |h_{\nu_2}+h_2| &~\ge~ |h_{\nu_1}+h_{\nu_2}+h_1+h_2| \\
    &~=~ |h_\nu + h_1 + h_2| ~\ge~ \kappa(\nu). \qedhere
 \end{align*}
\end{proof}
We also need the fact that the complexity of a node with a contradiction must be non-zero.
\begin{proposition}\label{prop:nonzero}
If $\kappa(\nu)=0$ then $b_\nu=0$.
\end{proposition}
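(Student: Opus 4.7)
The plan is to unpack the definitions and use the adjointness relation between $\partial$ and $\delta$ together with the fact that $\beta$ was chosen to be a cocycle.

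First I would observe that $\kappa(\nu) = 0$ means, by definition, that $\dist(h_\nu, B_2) = 0$, so $h_\nu \in B_2$. Thus there exists some $g \in C^3$ with $h_\nu = \partial g$.

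Next, I would invoke the proposition established just above, which says $b_\nu = \langle \beta, h_\nu \rangle$. Substituting, this gives $b_\nu = \langle \beta, \partial g \rangle$. Then the adjointness relation from \cref{claim:dual} yields $\langle \beta, \partial g \rangle = \langle \delta \beta, g \rangle$. Since $\beta \in Z^2 \setminus B^2$, in particular $\beta \in Z^2 = \ker \delta$, so $\delta \beta = 0$ and therefore $b_\nu = \langle 0, g \rangle = 0$.

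There is no real obstacle here; the proposition is essentially the statement that the linear functional $h \mapsto \langle \beta, h \rangle$ on $C^2$ vanishes on $B_2$ whenever $\beta \in Z^2$, which is the standard duality $Z^i = (B_i)^\perp$ already recorded in \cref{claim:ZBduals}. The only thing to be careful about is that the argument uses $h_\nu \in B_2$ exactly (not merely some approximation), which is why the hypothesis is $\kappa(\nu) = 0$ rather than $\kappa(\nu)$ being small.
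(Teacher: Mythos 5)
Your proof is correct and follows essentially the same route as the paper: conclude $h_\nu \in B_2$ from $\kappa(\nu)=0$, then use the duality $Z^2 = (B_2)^\perp$ (which you re-derive by unpacking adjointness via \cref{claim:dual}, whereas the paper cites \cref{claim:ZBduals} directly) to conclude $\ip{\beta,h_\nu}=0$.
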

\begin{proof}
  If $\kappa(\nu)=0$ then $h_\nu\in B_2 $. Hence $b_\nu = \ip{\beta, h_\nu} = 0$ since $\beta \in Z^2 = (B_2)^\perp$ (\cref{claim:ZBduals}).
\end{proof}
Next, we consider the width of each node in the DAG.
For a node $\nu$, let
\[ f_\nu := \dho h_\nu \in C^1 .\]
Thus $f_\nu$ indicates the set of variables appearing in the left-hand side of the equation on node $\nu$. So the width of the system is the maximum, over all nodes $\nu$ in the DAG, of $|f_\nu|$.

We can now prove \Cref{thm:width-lb} using the above complexity measure, and results from \Cref{sec:geometry}.
\begin{proof}[Proof of \Cref{thm:width-lb}]
Let $\nu^*$ denote the root of the DAG. By virtue of being a refutation, $b_{\nu^*} = 1$ while $f_{\nu^*}=0$.
In other words, $\dho h_{\nu^*} = f_{\nu^*}=0$, which means that $h_{\nu^*}\in Z_2$. Since $b_{\nu^*}=1$, we also have by \Cref{prop:nonzero} that $\kappa(\nu^*) > 0$.

Let $h \in B_2$ be such that $\kappa(\nu^*) = |h_{\nu^*}+h|$, and let $h_1, \ldots, h_s$ be the disjoint connected components of $h_{\nu^*}+h$.
We will first show that $\kappa(\nu^*) = |h_{\nu^*}+h|\geq r$. Assuming $\kappa(\nu^*)<r$, we have that
\[
|h_1| + \cdots + |h_s| ~=~ |h_{\nu^*}+h| ~<~ r .
\]
Also, since
\[
\dho h_1 + \cdots + \dho h_s ~=~ \dho(h_{\nu^*}+h) ~=~ \dho h_{\nu^*} ~=~ 0 ,
\]
we must have that $\dho h_i = 0$ for each $i \in [s]$, since connected components have disjoint boundaries. Applying \Cref{prop:boundary} to each $h_i$, we get that $h_i \in B_2$ for each $i \in [s]$. However, this implies $h_{\nu^*}+h \in B_2$ and hence $\kappa(\nu^*) = 0$, which is a contradiction.

Using sub-additivity (\Cref{prop:subadditive}), $\kappa(\nu^*) \ge r$,  and
the fact that the leaves  of the DAG satisfy $\kappa(\nu) = 1$, we get that there must be some
internal node $\nu$ for which
$r/2 \le \kappa(\nu)<r$. We can find such a node by starting at the root and always going to the child with higher complexity, until reaching a node $\nu$ such that $\kappa(\nu) < r$. We will prove that for such a node, we must have $|f_\nu| = \Omega(\sqrt{r})$. 

As before, let $h \in B_2$ now be such that $\kappa(\nu) = |h_\nu+h|$, and let $h_1, \ldots, h_s$ be the disjoint connected components of $h_\nu+h$. We have that $|h_i| \leq |h_\nu+h| < r$ for each $i \in [s]$. By the minimality of $|h_\nu+h|$, we also have that for any $h' \in B_2$ and any $i \in [s]$,
\[
|h_i| +  |h_\nu + h - h_i| ~=~ |h_\nu+h| ~\leq~ |h_\nu+h + h'| ~\leq~ |h_i+h'| + |h_\nu + h - h_i|  .
\]
Thus, $|h_i|$ is also minimal for each $i$, and we can apply \Cref{prop:filling} to each connected component $h_i$, to obtain
\begin{align*}
  |f_\nu| ~=~ |\dho(h_\nu+h)| ~=~ |\dho h_1| + \cdots + |\dho h_s| &~\ge~ c \cdot |h_1|^{1/2} +\cdots + c \cdot |h_s|^{1/2} \\
                                                               &~\ge~ c\cdot (|h_1|+ \cdots + |h_s|)^{1/2} \\
                                                               &~=~ c\cdot |h_\nu+h|^{1/2} ~\geq~ (c/\sqrt{2}) \cdot \sqrt{r}. \qedhere
\end{align*}

\end{proof}

\section*{Acknowledgements}
Part of this work was done when the authors were visiting the Simons Institute of Theory of Computing, Berkeley for the 2020 summer cluster on "Error-Correcting Codes and High-Dimensional Expansion". We thank the Simons institute for their kind hospitality.

{\small
\bibliographystyle{prahladhurl}
\bibliography{sosgap-bib.bib}
}

\appendix

\section{Proof of \Cref{lem:sos}}\label{sec:sos-width}
\noindent {\bf \cref{lem:sos} (Restated)} (\cite[Lemma~13]{Schoenebeck2008}, \cite[Theorem~4.2]{Tulsiani2009}) {\em 
Let $\Lambda$ be a system of equations in $n$ variables over $\F_2$, which does not admit any refutations of width at most $2t$. Then there exist vectors $\left\{\U{S}\right\}_{S \subseteq [n], \abs{S} \leq t}$ satisfying the constraints in \Cref{fig:sos}, such that for all equations $\sum_{i \in T} x_i = b_T$ in $\Lambda$ with $\abs{T} \leq t$, we have $\ip{\U{T},\Uempty} = (-1)^{b_T}$.
}

%
\begin{proof}
We assume that $\Lambda$ is closed under width-$2t$ $\oplus$-resolution, replacing $\Lambda$ by its closure if necessary, and also that it contains the trivial equation $0=0$. We will now construct the unit vector $\U{S}$.

Define a relation $\sim$ on subsets of $[n]$ of size at most $t$ as follows: $S \sim T$ iff there exists an equation $\sum_{i \in S \Delta T} x_i = b$ in $\Lambda$ for some $b \in \F_2$. It is easy to check that the relation is reflexive and symmetric. It is also transitive since for $S_1 \sim S_2$, $S_2 \sim S_3$, we can add the corresponding equations to obtain one of the form $\sum_{i \in S_1 \Delta S_3} x_i = b$ for some $b \in F_2$. Since $\abs{S_1}, \abs{S_3} \leq t$, this equation has at most $2t$ variables and must be in $\Lambda$ by the closure property. Thus, we have an equivalence relation which partitions all sets of size at most $t$ into equivalence classes, say $\C_1, \ldots, \C_s$. Choose an arbitrary representative $R_i$ for each class $\C_i$, and let $R(S)$ denote the representative for the class containing $S$. For convenience, we choose $R(\emptyset) = \emptyset$.

  We now construct the SDP vectors. Let $e_1, \ldots, e_s$ be an arbitrary orthonormal set of vectors, and assign $\U{R_i} = e_i$ for all $i \in [s]$. Note that for any $S$ with $\abs{S} \leq t$, there must be a \emph{unique} equation of the form $\sum_{i \in S \Delta R(S)} x_i = b_S$ in $\Lambda$, since two different equations can be used to obtain a width-$2t$ refutation. We assign the vector for $S$ as
\[
\U{S} ~:=~ (-1)^{b_S} \cdot \U{R(S)} .
\]
The vectors are unit-length by construction. Note that if $S_1\Delta S_2 = S_3 \Delta S_4$, we must have $S_1 \sim S_2 \Leftrightarrow S_3 \sim S_4$. If $S_1 \not \sim S_2$, then we have that $\ip{\U{S_1}, \U{S_2}} = \ip{\U{S_3}, \U{S_4}} = 0$. Otherwise, we have $R(S_1)=R(S_2)$,  $R(S_3) = R(S_4)$, and equations of the form
\[
\sum_{i \in S_j \Delta R({S_j})} x_i = b_{S_j}\, , \quad j \in \{1,2,3,4\} .
\]
We must also have $b_{S_1}+b_{S_2} = b_{S_3} + b_{S_4}$, since otherwise we obtain two different equations with variables in $S_1 \Delta S_2 = S_3 \Delta S_4$, yielding a refutation. This suffices to satisfy the SDP constraints, since
\[
\ip{\U{S_1},\U{S_2}} ~=~ (-1)^{b_{S_1}+b_{S_2}} \cdot \ip{\U{R(S_1)}, \U{R(S_2)}} ~=~ (-1)^{b_{S_1}+b_{S_2}} ~=~ (-1)^{b_{S_3}+b_{S_4}} ~=~  \ip{\U{S_3},\U{S_4}} .
\]
Finally, for any equation $\sum_{i \in T} x_i = b_T$ in $\Lambda$ with $\abs{T} \leq t$, we get $\ip{\U{T}, \Uempty} = (-1)^{b_T}$, since we must have $T \sim \emptyset$ and $R(T) = \emptyset$.
\end{proof}

\end{document}